\newcommand{\del}{\ensuremath{\delta}}
\newcommand{\Del}{\ensuremath{\Delta}}
\newcommand{\gam}{\ensuremath{\gamma}}
\newcommand{\lam}{\ensuremath{\lambda}}
\newcommand{\Lam}{\ensuremath{\Lambda}}
\newcommand{\ome}{\ensuremath{\omega}}
\newcommand{\sig}{\ensuremath{\sigma}}
\newcommand{\tp}{\ensuremath{\tilde{p}}}
\newcommand{\hp}{\ensuremath{\hat{p}}}
\newcommand{\hvc}{\ensuremath{\hat{\vc}}}
\newcommand{\Laplacep}{\ensuremath{(\mathscr{L}{p})}}
\newcommand{\hq}{\ensuremath{\hat{q}}}
\newcommand{\vp}{\ensuremath{\mathbf p}}
\newcommand{\vpM}{\ensuremath{\mathbf p}^M}
\newcommand{\po}{\ensuremath{p^\circ}}
\newcommand{\pTo}{\ensuremath{p^{T,\circ}}}
\newcommand{\pToM}{\ensuremath{p^{T,\circ,M}}}
\newcommand{\poD}{\ensuremath{p^{\Delta,\circ}}}
\newcommand{\tpo}{\ensuremath{\tilde{p}^\circ}}
\newcommand{\tpToM}{\ensuremath{\tilde{p}^{T,\circ,M}}}
\newcommand{\htpo}{\ensuremath{\hat{\tilde{p}}^\circ}}
\newcommand{\qoD}{\ensuremath{q^{\Delta,\circ}}}
\newcommand{\poo}{\ensuremath{p^\circ_0}}
\newcommand{\poM}{\ensuremath{p^{\circ,M}}}
\newcommand{\hpo}{\ensuremath{\hat{p}^\circ}}
\newcommand{\hpoD}{\ensuremath{\hat{p}^{\Delta,\circ}}}
\newcommand{\hpoMm}{\ensuremath{\hat{p}^{\circ,M}_m}}
\newcommand{\hqoD}{\ensuremath{\hat{q}^{\Delta,\circ}}}
\newcommand{\tpoM}{{\ensuremath{{\tilde{p}}^{\circ,M}}}}
\newcommand{\tpoMm}{{\ensuremath{{\tilde{p}}^{\circ,M}_m}}}
\newcommand{\tpoMk}{{\ensuremath{{\tilde{p}}^{\circ,M}_k}}}
\newcommand{\tpoKk}{{\ensuremath{{\tilde{p}}^{\circ,K}_k}}}
\newcommand{\tpoKo}{{\ensuremath{{\tilde{p}}^{\circ,K}_0}}}
\newcommand{\tpoMnM}{{\ensuremath{{\tilde{p}}^{\circ,M}_{m\!-\!M}}}}
\newcommand{\poKo}{{\ensuremath{p}^{\circ,K}_0}}
\newcommand{\poMm}{{\ensuremath{p}^{\circ,M}_m}}
\newcommand{\tvpoM}{\ensuremath{{\tilde{\mathbf p}}^{\circ,M}}}
\newcommand{\Noise}{{\ensuremath{\mathcal{N}_0}}}
\newcommand{\Hil}{{\ensuremath{\mathcal H}}}
\newcommand{\En}{{\ensuremath{\mathcal {E}}}}
\newcommand{\SI}{\ensuremath{ V }}
\newcommand{\C}{{\ensuremath{\mathbb C}}}
\newcommand{\R}{{\ensuremath{\mathbb R}}}
\newcommand{\N}{{\ensuremath{\mathbb N}}}
\newcommand{\Z}{{\ensuremath{\mathbb Z}}}
\newcommand{\M}{{\ensuremath{\mathcal{M}}}}
\newcommand{\mC}{{\ensuremath{\mathbf C}}}
\newcommand{\mD}{{\ensuremath{\mathbf D}}}
\newcommand{\Gram}{{\ensuremath{\mathbf G}}}
\newcommand{\GramM}{{\ensuremath{{\mathbf G}_M}}}
\newcommand{\tGramM}{{\ensuremath{{\mathbf {\tilde{G}}}_M}}}
\newcommand{\GramMinv}{{\ensuremath{\mathbf G}_M^{-\frac{1}{2}}}}
\newcommand{\Graminv}{{\ensuremath{\mathbf G}^{-\frac{1}{2}}}}
\newcommand{\GramMi}{{\ensuremath{\mathbf G}^{-1}_M}}
\newcommand{\tGramMinv}{{\ensuremath{\mathbf {\tilde{G}}}_M^{-\frac{1}{2}}}}
\newcommand{\FmatrixM}{{\ensuremath{\mathbf F}_M}}
\newcommand{\FmatrixMa}{{\ensuremath{\mathbf F}_M^*}}
\newcommand{\tDiagM}{{\ensuremath{\tilde{{\mathbf D}}_M}}}
\newcommand{\tlam}{{\ensuremath{\tilde{\lam}}}}
\newcommand{\tDiagMinv}{{\ensuremath{\mathbf {\tilde{D}}}_M^{-\frac{1}{2}}}}
\newcommand{\Zak}{{\ensuremath{\mathbf Z}}}
\newcommand{\eins}{{\ensuremath{\mathbf 1}}}
\newcommand{\LRA}{\ensuremath{\Leftrightarrow} }
\newcommand{\vg}{{\ensuremath{\mathbf g}}}
\newcommand{\vgL}{{\ensuremath{\mathbf g}}}
\newcommand{\vgtDg}{{\ensuremath{{{\tilde{\mathbf g}}}_{\Delta,{\vg}}}}}
\newcommand{\hvgtDg}{{\ensuremath{{\hat{\tilde{\mathbf g}}}_{\Delta,{\vg}}}}}
\newcommand{\hvgtDzweig}{{\ensuremath{{\hat{\tilde{\mathbf g}}}_{2,{\vg}}}}}
\newcommand{\hvgL}{{\ensuremath{\hat{\mathbf g}}}}
\newcommand{\hvg}{{\ensuremath{\hat{\mathbf g}}}}
\newcommand{\va}{{\ensuremath{\mathbf a}}}
\newcommand{\vc}{{\ensuremath{\mathbf c}}}
\newcommand{\vd}{{\ensuremath{\mathbf d}}}
\newcommand{\vphi}{\ensuremath{\boldsymbol{ \phi}}}
\newcommand{\vgm}{\ensuremath{{\mathbf g}_m}}
\newcommand{\vP}{\ensuremath{\mathbf P }}                         
\newcommand{\vPM}{\ensuremath{\mathbf P\!}_M}                         
\newcommand{\vr}{\ensuremath{\mathbf r}_{\vg}}                         
\newcommand{\vh}{\ensuremath{\mathbf b }}                         
\newcommand{\vbb}{\ensuremath{\mathbf b }}                         
\newcommand{\hvr}{{\ensuremath{\hat{\mathbf r}_{\vg}}}}
\newcommand{\hvh}{{\ensuremath{\hat{\mathbf b}}}}
\newcommand{\ttau}{\ensuremath{\tilde\tau }}                         
\newcommand{\thmref}[1]{Satz~\ref{#1}}
\newcommand{\lemref}[1]{Lemma~\ref{#1}}
\newcommand{\secref}[1]{Section~\ref{#1}}
\newcommand{\figref}[1]{Abbildung~\ref{#1}}
\newcommand{\noi}{\noindent}
\newtheorem{lemma}{Lemma}
\newtheorem{definition}[theorem]{Definition}         
\newtheorem{corrolary}{Corrolary} 
\par\noindent{\em Beweis\/}.}%
\hspace*{\fill}{\qed}\vspace{1ex}\par}
\par\noindent{\em Proof\/}.}%
\newenvironment{bemerkung}%
{\par\vspace{1.5ex}\noindent{\em Remark\/}.}
{\par\vspace{1.5ex}}
{\par\vspace{1.5ex}\noindent{\em Example\/ }}
{\par\vspace{1.5ex}}
{\noi\vspace{0.5ex}\small}
{\vspace{0.5ex}\par\normalsize}
\par\vspace{2ex}\small\noindent{\em Example\/}.}
\renewcommand{\labelenumi}{(\roman{enumi})}\begin{list}{\labelenumi}
\renewcommand{\labelenumi}{(\arabic{enumi})}\begin{list}{\labelenumi}
\renewcommand{\labelenumi}{$\bullet$}\begin{list}{\labelenumi}
\newcommand{\sprod}[2]{\ensuremath{%
\setbox0=\hbox{\ensuremath{#2}}
\dimen@\ht0
\advance\dimen@ by \dp0
\left(\left.#1\rule[-\dp0]{0pt}{\dimen@}\right|#2\right)}}
\newcommand{\set}[2]{\ensuremath{%
\setbox0=\hbox{\ensuremath{#2}}
\dimen@\ht0
\advance\dimen@ by \dp0
\left\{\left.#1\rule[-\dp0]{0pt}{\dimen@}\;\right|\;#2\right\} }}
\newcommand{\Set}[1]{\ensuremath{\left\{ #1 \right\}}}
\DeclareMathOperator{\spann}{span}
\newcommand{\convdis}{{\ensuremath{\ *^{\prime}\, }}}
\newcommand{\convdisT}{{\ensuremath{\ *^{\prime}_T\, }}}
\newcommand{\convdisTo}{{\ensuremath{\ *^{\prime}_{T_0}\, }}}
\newcommand{\convdisD}{{\ensuremath{\ *^{\prime}_\Delta\, }}}
\DeclareMathOperator{\esup}{ess\, sup}
\newcommand{\esssup}[2]{\ensuremath{ \underset{#1}{\esup}\left\{#2\right\}}}
\newcommand{\Betrag}[1]{\ensuremath{ \left|#1\right| }}
\newcommand{\Abs}[1]{\ensuremath{ \left|#1\right| }}
\newcommand{\Norm}[1]{\ensuremath{ \left\|#1\right\| }}
\newcommand{\SP}[2]{\ensuremath{ \langle#1,#2\rangle }}
\newcommand{\SPH}[2]{\ensuremath{ \langle#1,#2\rangle }_{_{\Hil}}}
\newcommand{\Expect}[1]{{\ensuremath{\mathbb E}\left[ #1 \right]}}
\newcommand{\cc}[1]{{\ensuremath{\overline{#1}}}} 
\DeclareMathOperator{\supp}{supp}
\DeclareMathOperator{\erfc}{erfc}
\newcommand{\PSI}{\ensuremath{ {\mathcal S}}}
\newcommand{\PSIo}{\ensuremath{ {\mathcal S}_0}}
\newcommand{\namen}[1]{{\textsc{#1}}}           
\newcommand{\nesp}{\ensuremath{ \eta}}           
\newcommand{\tnesp}{\ensuremath{ \tilde{\eta}}}           
\newcommand{\SFCC}{\ensuremath{S_{\!F}}}           
\newcommand{\Sfcc}{\SFCC}           
\renewcommand\paragraph{\@startsection
{paragraph}{4}{\z@}{-3.5ex plus-1ex minus-.2ex}%
{1.3ex plus.2ex}{\normalfont\itshape}}
\renewcommand{\thmref}[1]{Theorem~\ref{#1}}
\renewcommand{\secref}[1]{Section~\ref{#1}}
\renewcommand{\lemref}[1]{Lemma~\ref{#1}}
\renewcommand{\figref}[1]{Fig.~\ref{#1}}
\newtheorem{thm}{Theorem}
\newtheorem{prop}{Proposition}
\newtheorem{defi}{Definition}
\newenvironment{mycomment}%
{
\comment\noi}%
{\endcomment}
\definecolor{gray}{rgb}{0.3,0.3,0.3}
{\color{black}}                      
{\color{black}}
\newcommand{\changes}{\color{black}} 
\newcommand{\changee}{\color{black}} 
\begin{document}
\setcounter{MaxMatrixCols}{12}                              


\begin{frontmatter}
\title{Approximation of Löwdin Orthogonalization to a Spectrally Efficient Orthogonal Overlapping PPM Design for UWB
Impulse Radio }

\author[rvt]{Philipp Walk}
\ead{philipp.walk@tum.de}
\author[rvt2]{Peter Jung}
\ead{peter.jung@mk.tu-berlin.de}
\fntext[]{Some of the results of this paper were previously announced, without proofs, at the conferences
\cite{WJ10,WJT10}.}

\address[rvt]{TU-München, Chair for Theoretical Informationtechnology, Arcsistrasse 21, 80290 München}
\address[rvt2]{TU-Berlin, Chair for Information Theory and Theoretical Informationtechnology, Einsteinufer 25, 10587 Berlin}

\begin{abstract}
In this paper we consider the design of spectrally efficient time-limited pulses for ultra-wideband  (UWB) systems using
an overlapping pulse position modulation scheme.  
For this we investigate an orthogonalization method, which was developed in 1950 by Löwdin \cite{Low50,Loe70}.
Our objective is to obtain a set of $N$ orthogonal (Löwdin) pulses,  which remain time-limited and spectrally efficient
for UWB systems, from a set of $N$ equidistant translates of a time-limited optimal spectral designed UWB pulse.
We derive an approximate Löwdin orthogonalization (ALO) by using circulant approximations for the Gram matrix to obtain
a practical filter implementation \changes as a tapped-delay-line \cite{TDLWG06}. \changee  We show
that the centered ALO and Löwdin pulses converge pointwise to the same \changes square-root Nyquist pulse \changee as
$N$ tends to infinity. The set of translates of the \changes square-root Nyquist pulse \changee forms an orthonormal
basis for the shift-invariant-space generated by the initial spectral optimal pulse. The ALO transformation provides a
closed-form approximation of the Löwdin transformation, which can be implemented in an analog fashion without the need of
analog to digital conversions.  Furthermore, we investigate the interplay between the optimization and the
orthogonalization procedure by using methods from the theory of shift-invariant-spaces. Finally we relate our results to wavelet and frame theory. 
\end{abstract} 
%

\begin{keyword}
  Löwdin Transformation \sep Shift-invariant spaces \sep PPM UWB transmission \sep FIR filter design \sep canonical
  tight frame
\end{keyword}

\end{frontmatter}

\section{Introduction}


We consider in this work high data rate transmission in the ultra wideband (UWB) regime.  To prevent disturbance of
existing systems, e.g. GPS and UMTS, the Federal Communications Commission (FCC) released \cite{FCC02} a very low power
spectral density (PSD) mask for ultra-wideband (UWB) systems.  To ensure that sufficiently high \changes signal-to-noise ratio
(SNR) \changee is maintained in the frequency band $F=[0,14]$GHz, as required by the FCC, the pulses have to be designed for a
high efficient frequency utilisation. This utilisation can be expressed by the pulses normalized effective signal power
(NESP) \cite{LYG03}.  Several pulse shaping methods for pulse amplitude and pulse position modulation (PAM and PPM) were
developed in the last decade based on a FIR prefiltering of a fixed basic pulse. \changes Due to the high sampling rates
in UWB this FIR filtering is realized by a tapped delay line \cite{WTDG06}. \changee A SNR optimization under the FCC
mask constraints then reduces to a FIR filter optimization \cite{TDLWG06,WBV99,LYG03,WTDG06}. 

Since the SNR is limited, the amount of signals can be increased to achieve higher data rates or to enable multi-user
capabilities. For coherent and synchronized transmission over memoryless AWGN channels, an increased number $N$ of
mutually orthogonal UWB signals inside the same time slot, known as  $N$-ary orthogonal signal design, improves the BER
performance over $E_b/N_0$ and hence the achievable \label{rev1rem1}rate of the system
  \cite[Ch.4]{SHL94}. 

Combining spectral shaping and orthogonalization is an inherently difficult problem being neither linear nor convex.
Therefore most methods approach this problem {\bfseries sequentially}, e.g. combining spectral optimization with a
\namen{Gram-Schmidt} construction \cite{And05,WTDG06,Pro01}.  This usually results in an unacceptable loss in the NESP
value of the pulses \cite{WTDG04b,DK07}.  Moreover these orthogonal pulses are different in shape and therefore not
useful for PPM.
Furthermore, a big challenge in UWB impulse radio (UWB-IR) implementation are high rate sampling operations. Therefore,
an analog transmission scheme is desirable \cite{PCWD03,RM07,DK07} to avoid high sampling rates in AD/DA conversion
\cite{PCWD03}.

Usually, PPM is referred to an orthogonal (non-overlapping) pulse modulation scheme. To achieve higher data rates in
PPM, pulse overlapping was already investigated in optical communication \cite{BK84} and called OPPM. An application to
UWB was studied for the binary case with a Gaussian monocycle \cite{ZMSF01}. To the authors' knowledge, no orthogonal
overlapping PPM (OOPPM) signaling has been considered based on strictly time-limited pulses.  In this work we propose a
new analog pulse shape design for UWB-IR to enable  an almost  OOPPM signaling which approximate the
OOPPM scheme up to a desired accuracy.

In our approach, we first design a time-limited spectral optimized pulse $p$ and perform afterwards a Löwdin
orthogonalization of the set of $2M\!+\!1$ integer translates $\{p(\cdot\!-\!k)\}_{k=-M}^M$\changes, which span the
function space
$\SI^M(p)$\changee.  This orthogonalization
method provides an implementable and stable approximation $\poM$ to a \changes normalized \emph{square-root Nyquist
pulse} $\po$ (all integer translates are mutually orthonormal) \changee .  Using the \emph{Fourier transformation} for
$p$ given by $\hp(\nu)=\int_\R p(t)e^{-i 2\pi t \nu}dt$\label{eq:fouriertransform}
for every $\nu\in\R$, the \changes square-root Nyquist pulse \changee $\po$ can be expressed in the frequency-domain
\begin{align}
  \hpo(\nu) = \frac{\hp(\nu)}{\sqrt{\sum_k \Betrag{\hp(\nu -k )}^2}}\label{eq:nyquist} 
\end{align}
for $\nu$ almost everywhere, which is known as the \emph{orthogonalization trick} \cite{Dau90}. 

Usually in digital signal processing, see \figref{fig:AppScheme}, the time-continuous signals (pulses) \changes in $L^2$
\changee are sampled by an analog to digital (AD) operation \changes (either in time or frequency) \changee to obtain
time-discrete signals in $\C^{2M+1}$. Then a discretization of the orthogonalization in \eqref{eq:nyquist} yields a
finite digital transformation $D^M$ to construct a discrete signal in the \changes frequency \changee domain which has
again to be transformed by a DA operation to obtain finally an approximation of the \changes square-root \changee
Nyquist pulse \cite{JS02}.\\
Instead of using such an AD/DA conversion to operate in a discrete domain \changes (depictured in \figref{fig:AppScheme} by the
digital box)\changee, we use the {\bfseries democratic}
\label{rev1remii} Löwdin 
orthogonalization\changes\footnote{In fact, the Löwdin orthogonalization and the orthogonalization trick are both
orthonormalizations, but for historical reasons we will refer them as orthogonalization methods.} \changee $B^M$
\changes (in the analog box)\changee, found by Per-Olov Löwdin in \cite{Low50,Loe70}, where all $2M+1$ linear
independent pulse translates are involved simultaneously by a linear combination to generate a set of time-limited
mutual orthonormal pulses. Hence the Löwdin orthogonalization is order independent. The Löwdin pulses constitute then an
orthonormal basis for the span $\SI^M(p)$ of the initial basis $\{p(\cdot -k)\}_{k=-M}^M$.  Moreover, as we will show in
our main Theorem~\ref{th:lozak},  the Löwdin orthogonalization  $B^M$ is a stable approximation method to \changes the
construction \changee in \eqref{eq:nyquist} and operates completely in the analog domain. Note, that a discretization of
\eqref{eq:nyquist} generates neither shift-\changes orthonormal \changee pulses nor a set of mutual \changes orthonormal
\changee pulses.  The important property of the Löwdin method is its minimal summed energy distortion to the initial
basis. It turns out that all \changes orthonormal \changee pulses maintain  the spectral efficiency ''quite well''.

As $M$ tends to infinity the Löwdin orthogonalization $B=B^\infty$ applied to the initial pulse $p$ delivers the
\changes square-root \changee Nyquist pulse $\po$, which allows a real-time OOPPM %
system with only one single matched filter at the receiver. Since the Löwdin transform $B^M$ is hard to compute and to
control we introduce an approximate Löwdin orthogonalization (ALO) $\tilde{B}^M$ and investigate its stability and
convergence properties. It turns out that for fixed $M$ the transformations $B^M$ and $\tilde{B}^M$ are both
implementable by a FIR filter bank \changes (realised as a tapped delay line) like the  spectral  optimization in
\cite{WTDG06}\changee. We call $B^M p$ and $\tilde{B}^M p\ $
\emph{approximate \changes square-root \changee Nyquist} pulses, since we observed that even for finite $M$ our analog
approximation yields time--limited pulses with almost shift--orthogonal character since the sample values of the
autocorrelation are below a measurable magnitude for the correlator. Hence such a construction of approximate \changes
square-root \changee Nyquist pulses seems to be promising  for an OOPPM system. 
\begin{figure}
  \begin{center}
  \includegraphics[scale=0.35]{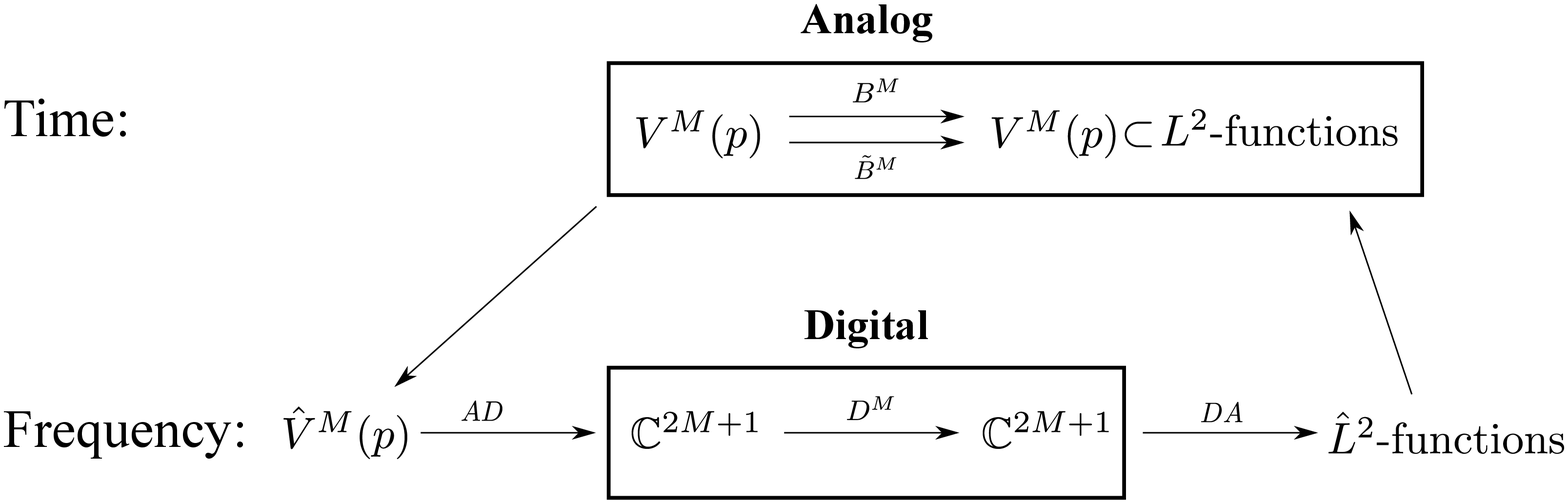}
  \caption{Analog and discrete approximation methods in time and frequency domains}\label{fig:AppScheme}
  \end{center}
\end{figure}

The structure of this paper is as follows: In \secref{sec:signalmodel} we introduce the signal model and motivate our
spectrally efficient $N$-ary orthogonal overlapping PPM design for UWB systems. %
\secref{sec:pulseshapedesign} presents the state of the art in FCC optimal pulse shaping for UWB-IR based on PPM or PAM
transmission by FIR prefiltering of Gaussian monocycles which is a necessary prerequisite for our design.  %
To develop our approximation and convergence results in  \secref{sec:orthogonalization} we introduce the theory of
shift--invariant spaces in \secref{sec:shiftspaces} and in \secref{sec:lofinite} the Löwdin orthogonalization for a set
of $N$ translates to provide a $N$-ary orthogonal overlapping transmission.  Our main result is given in
\secref{sec:loinfinite}, where we consider the stability  of the Löwdin orthogonalization $B^M$ (for $M$ increasing to
infinity) and develop for this a simplified approximation method $\tilde{B}^M$, called ALO. %
In \secref{sec:discussion} we study certain properties of our filter design and investigate the combination of both
approaches.  Furthermore, in \secref{sec:TFandONB} we develop a connection between our result and the canonical tight
frame construction.  Finally in \secref{sec:app}, we demonstrate that the ALO and Löwdin transforms yields for
sufficiently large filter orders compactly supported approximate \changes square-root \changee Nyquist pulses, which can
be used for OOPPM having high spectral efficiency in the FCC region. Moreover,  the Löwdin pulses provides also a
spectrally efficient $(2M+1)$-ary orthogonal pulse shape modulation (PSM) \cite{GK08}.

\section{Signal Model}\label{sec:signalmodel}

In this work we will consider finite energy pulses $p$, i.e. the set $L^2:=\set{p:\R\to\C}{\Norm{p}_{L^2} <\infty}$ of
 square-integrable  functions with norm $\Norm{\cdot}_{L^2}:=\sqrt{\SP{\cdot}{\cdot}}$ induced by the
inner product\footnote{The bar denotes the closure for sets and the complex conjugation for vectors or functions.} 
\begin{align}
  \SP{p}{q}:=\int_\R p(t) \cc{q(t}) dt \label{eq:sp}.
\end{align}

To control signal power in time or frequency locally we need \changes pulses in
$L^\infty:=\set{p:\R\to\C}{\Norm{p}_{\infty} <\infty}$ which are  essentially  bounded, i.e. functions with a
finite $L^\infty$-norm, given as \changee
\begin{align}
  \Norm{p}_\infty := \esssup{t\in\R}{ | p(t)|},\label{eq:esssup}
\end{align}
where the \emph{essential supremum} is defined as the smallest upper bound for $\Betrag{p(t)}$ except on a set of
measure zero.  If the pulse is continuous than this implies boundedness everywhere.
UWB-IR technology uses ultra short pulses, i.e. strictly time-limited pulses with support contained in a finite interval
$X\subset\R$. We call such  $L^2$- functions \emph{compactly supported} in $X$ and denote its 
closed  span by the subspace $L^2(X)$.  The coding of an information sequence $\{d_n\}=\{d_n\}_{n\in\Z}$ is
realized by pulse modulation techniques \cite{Pro01,Mid96} of a fixed normalized basic pulse $p\in L^2([0,T_p])$ with
duration $T_p$.

A relevant issue in the UWB-IR framework and in our paper is the spectral shape of the pulse.  In this section we will
therefore summarize the derivation of spectral densities for common UWB modulation schemes such as PAM \cite{TDLWG06},
PPM \cite{Sch93,LYG03,WS98a,Win99,RS98} and combinations of both \cite{NM03} to justify our spectral shaping in the next
section. %
Antipodal PAM and $N$-ary PPM are linear modulation schemes which map each data symbol $d_n$  to a pulse (symbol)
$s_{d_n}(t)$ with the same power spectrum $\En\Betrag{\hp(\nu)}^2$.  If we fix the energy $\En$ of the transmitted
symbols and the pulse repetition time (symbol duration) $T_s$, we will show now for certain discrete random processes
( e.g. i.i.d.  processes \cite{WJ10}) that the power spectrum density (PSD) of the transmitted signals
is given by
\begin{align}
  S_{u} (\nu) = \frac{\En \Betrag{\hp(\nu)}^2}{T_s}\label{eq:psd}.
\end{align}
Hence an optimization of the pulse power spectrum  to the FCC mask $\SFCC(\nu)$ over the
band $F$ in \secref{sec:pulseshapedesign} increases the transmit power.  To be more precise, PPM
produces discrete spectral lines, induced by the periodic pulse repetition, the use of uniformly
distributed pseudo-random time hopping (TH) codes $c_n\in[0,N_c]$ was suggested to reduce this effect and to enable
multi-user capabilities \cite{Sch93,Win99,WS98a,Win02,WZK08}:
\begin{align}
  u(t) = \sum_{n=-\infty}^\infty \sqrt{\En} p (t - n T_f - c_n T_c - d_{\lfloor n/N_f \rfloor} T ) \label{eq:uwbsignal}.
\end{align}
In \cite{Win02} this is called framed TH by a random sequence, since the coding is repeated in each frame $N_f$ times
with a clock rate of $1/T_f$.  Hence $N_f T_f = T_s$ is the symbol duration for transmitting one out of $N$ symbol
waveforms representing the encoded information symbol $d_n\in\{0,\dots,N-1\}$.  To prevent ISI and collision with other
users, the maximal PPM shift $T$ and TH shift $T_c$ have to fulfill $NT\leq T_c$ and $N_cT_c\leq T_f$. To ensure mutual
orthogonality of all symbols one requires $T>T_p$. The PSD for independent discrete i.i.d. processes $\{c_n\},\{d_n\}$
follows from the Wiener-Khintchine Theorem \cite{Win02,VE05} to \cite[(5)]{LYG03}, \cite{NM03,NM06}.
\begin{align}
  S_{u} (\nu) = \frac{\En\Betrag{\hp(\nu)}^2}{T_f} \left[ 1\!-\!\Betrag{G_{\beta}(\nu)}^2 
  + \frac{\Betrag{G_{\beta}(\nu)}^2}{T_f}\sum_k \delta\left(\nu \!-\!\frac{k}{T_f}\right) \right] 
\end{align}
\begin{align}
  \text{with } 
  \Betrag{G_\beta (\nu)} = 
  \frac{1}{N_c N} \Betrag{\frac{\sin(\pi \nu T_c N_c)}{\sin(\pi \nu  T_c)}} \Betrag{\frac{\sin(\pi \nu
  T N)}{\sin(\pi \nu  T)}} \label{eq:gs} \text{ and $\delta$ is the Dirac distribution}.
\end{align}
However, a more effective and simple reduction method without the use of frame repetition ($N_f=1$) or random TH
($N_c=1$) has been proposed in \cite{NM03,NM06}.  Here antipodal PAM with $a_n\in\{-1,1\}$ is combined with $N$-ary PPM
modulation, for $NT\leq T_s$ 
\begin{align}
  u(t)= \sqrt{\En}\sum_n a_n p(t-nT_s - d_nT)\label{eq:APAMPPM}.
\end{align}
The PSD for such i.i.d. processes is well known \cite[Sec.4.3]{Mid96}:
\begin{align}
  S_{u}(\nu) = \frac{\En \Betrag{\hp(\nu)}^2}{T_s} \Bigg[ \Expect{a^2} - \Betrag{\Expect{a} \cdot \Expect{e^{-i 2\pi
  \nu d T  }}}^2 +\frac{ \Betrag{\Expect{a} \cdot \Expect{e^{-i 2\pi \nu dT }}}^2}{T_s} \sum_n
  \delta\left(\nu -\frac{n}{T_s} \right) \Bigg]\label{eq:bppam}\ ,
\end{align}
since we have $G_\beta(\nu)=\Expect{a} \Expect{e^{i2\pi \nu dT }}$ in \eqref{eq:gs}. For an i.i.d. process $\{a_n\}$
with expectation $\Expect{a}=0$ and variance $\Expect{a^2}=1$ the PSD reduces to \eqref{eq:psd}.  Hence the effective
radiation power is \emph{essentially determined by the pulse shape} times the energy $\En$ per symbol duration $T_s$ and
should be bounded pointwise on $F$ below the FCC mask $\SFCC$
\begin{align} 
  S_u (\nu) = \En \frac{\Betrag{\hp (\nu)}^2}{T_s} \leq \SFCC(\nu)
  \label{eq:psdeirp}\quad \text{for all} \quad\nu\in F.
\end{align}
The optimal receiver for $N$-ary orthogonal PPM in a memoryless AWGN  channel with noise power density $\Noise$ is the
coherent correlation receiver. The  \changes uncoded \changee bit \label{rev1mem1b} rate $R_b$ and average symbol error
  probability $P_s$ is given as \cite[4.1.4]{SHL94}
\begin{align}
  R_b =\frac{\log {N}}{T_s}\quad \text{and}\quad P_s (\En) \leq (N-1) \erfc\left(
  \sqrt{\frac{\En}{\Noise}}\right),\label{eq:performpar}
\end{align}
where $\erfc$ is the complementary error function. Hence, a performance gain for fixed $T_s$ is achieved by increasing
$N$ and/or $\En$.
\paragraph[:]{Increasing  $N$} Usually non--overlapping pulses are necessary in PPM to guarantee orthogonality of the
set  $\{p(\cdot-nT)\}:=\{p(\cdot - nT)\}_{n\in\Z}$ of pulse translates, i.e. $T>T_p$. For fixed $T_p$ this limits the
number of pulses $N$ in $[0,T_s]$ and hence the rate $R_b$. In this work we will design an orthogonal overlapping PPM
(OOPPM) system by keeping all overlapping translates mutually orthogonal.  But such \changes square-root \changee
Nyquist pulses are in general not time--limited, i.e. not compactly supported. In fact, we will show that for a
particular class of compactly supported pulses a non-overlapping of the \changes translates \changee is necessary to
obtain strict shift-orthogonality.  However, we derive overlapping compactly supported pulses approximating the \changes
square-root \changee Nyquist pulse in \eqref{eq:nyquist} and characterize the convergence.  These \changes
approximations to the square-root \changee Nyquist pulse allow a realizable $N$-ary OPPM implementation based on FIR filtering of
time--limited analog pulses.

\paragraph{Increasing $\mathcal{E}$} The maximization of $\En$ with respect to the FCC mask was already studied in
\cite{LYG03,TDLWG06} where a FIR prefiltering is used to shape the pulse  such that its radiated power spectrum
efficiently  exploits and strictly respect the FCC mask. Note that the FCC regulation in \eqref{eq:psdeirp} is a local
constraint and does not force a strict band-limited design, however fast frequency decay outside the interval $F$ is
desirable for a hardware realization.\\  

Our combined approach now relies on the construction of two prefiltering operations to shape a fixed initial pulse. The
first filter shapes the pulse to optimally exploit the FCC mask and the second filter generates an \changes
approximation to the square-root \changee Nyquist pulse. The filter operations can be described as \emph{semi--discrete
convolutions} of pulses $p\!\in\!  L^2$  with sequences \changes $\vc\in\ell^2:=\{\va \mid \Norm{\va}^2_{\ell^2} \!=\!
\sum_n \Betrag{a_n}^2 \!<\!\infty\}$ \changee
\begin{align} 
  p \convdisT \vc := \sum_{n\in\Z} c_n  p(\cdot - nT) \label{eq:semiconv},
\end{align}
\changes at \changee clock rate $1/T$. If we restrict ourselves to FIR filters of order $L$ \changes  the impulse
response becomes a sequence $\vc\in\ell^2(L)$  which can be regarded as a vectors in $\C^L$. Hence we refer for $\vc\in
\C^L$ to FIR filters and for $\vc\in \ell^2$ to IIR filters.\changee

\section{FCC Optimization of a Single Pulse}\label{sec:pulseshapedesign}

The first prefilter operation generates an optimized FCC pulse $p$. To generate a time-limited real-valued pulse we
consider a real-valued initial input pulse $q\in L^2([-T_q/2,T_q/2])$ and a real-valued (causal) FIR filter
$\vgL\in\R^L$. A common UWB pulse is the truncated Gaussian monocycle $q$ \cite{LYG03,TDLWG06,BEZJ07}, see also
\secref{sec:app}. The prefilter
operation is then:
\begin{align}
  p(t) =(q \convdisTo \vg) (t) = \sum_{k=0}^{L-1} g_k q(t-kT_0) \label{eq:firfilter}
\end{align}
which results in a maximal duration \changes (support length) \changee $T_p=(L-1)T_0 + T_q$ of $p$.  

To maximize the PSD according to \eqref{eq:psdeirp}  we have to shape the initial pulse by the filter $\vg$ to exploit
efficiently the FCC mask $\SFCC$ in the passband $F_p\!\subset\! F$, i.e. to maximize  the ratio of the
pulse power in $F_p$ and the maximal power allowed by the FCC 
\begin{align}
  \nesp (p) := \int_{F_p} |\hp(\nu)|^2 d\nu \bigg/ \int_{F_p} \Sfcc(\nu) d\nu. \label{eq:nesp}
\end{align}
This is known as the direct maximization of the NESP value $\nesp(p)$, see \cite{WTDG06}.  Here we already
included the constants $\En$ and $T_s$ in the basic pulse $p$.  If we fix the initial pulse $q$, the clock rate $1/T_0$
and the filter order $L$, we get the following  optimization problem 
\begin{align}
  \begin{split}
    &\max_{\vg \in \R^L} \tnesp \left( q \convdisTo \vg \right)\\
    \text{s.t. } &\forall \nu\in F : \Betrag{\hvg(\nu)}^2\cdot \Betrag{\hq(\nu)}^2 \leq \Sfcc(\nu),
  \end{split}\label{eq:problemsub}
\end{align}
where $\hvg$ denotes the $1/T_0$ periodic  Fourier series of $\vg$, which is defined for an arbitrary sequence
$\vc\in\ell^2$ as
\begin{align}
  \hat{\vc}(\nu)= \sum_{n=-\infty}^{\infty} c_n e^{-2\pi i \nu n T_0}\label{eq:dft}.
\end{align}
Since $\vg\in\ell^2(L)$ the sum in \eqref{eq:dft} becomes finite for $\hvg$. %
Liu and Wan \cite{LW08} studied the non-convex optimization problem \eqref{eq:problemsub} with non-linear constraints
numerically  with fmincon, a \namen{Matlab} program. The disadvantage of this approach lies in the trap of a local
optimum, which can only be overcome by an intelligent choice of the start parameters.

Alternatively \eqref{eq:problemsub} can be reformulated in a convex form by using the Fourier series of the
autocorrelation $r_{\vg,n}:=\sum_k g_k g_{k-n}$ of the filter $\vg$ \cite{DLS02}.   Since $r_{\vg,n}=r_{\vg,-n}$
(real-valued symmetric sequence) we write
\begin{align}
  \hvr(\nu) &:=  \sum_{n=0}^{L-1} r_{\vg,n} \phi_n(\nu) = \Betrag{\hvg(\nu)}^2
\end{align}
on the frequency band $[-\frac{1}{2T_0},\frac{1}{2T_0}]$ by using the basis $\vphi :=\left\{1, 2\cos(2\pi \nu T_0),2\cos
(2\pi \nu 2 T_0),\dots\right\}$ and get $\Abs{\hp(\nu)}^2 = \hvr (\nu)\cdot \Abs{\hq(\nu)}^2$. Due to the symmetry of
$\phi_n$ and $\SFCC$ we can restrict the constraints in \eqref{eq:problemsub} to $F=[0,\frac{1}{2T_0}]$ and 
obtain the following semi--infinite linear problem:
\begin{align}
  &\max_{\vr \!\in \R^{L}} \sum_{n=0}^{L-1} r_{\vg,n} c_n\quad\!\text{such that} \quad  0 \leq
  \hvr(\nu) \leq \M(\nu)\ \text{ for all $\ \nu\!\in\! \left[0,\frac{1}{2T_0}\right]$}\\
  &\text{with}\quad\M(\nu):=\frac{\SFCC(\nu)}{\Abs{\hq(\nu)}^2}\quad \text{and}\quad c_n :=
  \int_{F_p} \Betrag{\hq(\nu)}^2 \phi_n (\nu) d\nu.\label{eq:constrains}
\end{align}
Since the FCC mask  is piecewise constant, we separate $\M(\nu)$ into five sections $\M_i(\nu)$ \cite{BEJJ06} and get
the inequalities%
\begin{align}
  \forall i={1,\dots,5}:\ \hvr (\nu) &\leq \M_i (\nu)\quad\text{for}\quad \nu\in [\alpha_i,\beta_i]  \label{eq:semicons}
\end{align}
with $\beta_1=1.61, \beta_2=1.99, \beta_3=3.1, \beta_4=10.6, \beta_5=14$ and $\alpha_1=\alpha_2=\alpha_3=\alpha_4=0,
\alpha_5=\beta_4$
in GHz, see \figref{fig:fit}. \\
The necessary lower bound for $\vr$ reads 
\begin{align}
  \hvr (\nu) \geq 0 \quad\text{for}\quad \nu\in \left[0,\frac{1}{2T_0}\right]=[0,14] \text{GHz}\label{lbound}.
\end{align}
To formulate the  constraints in \eqref{eq:constrains} for $\vr$ as a positive bounded cone in $\R^L$ we approximate
 $\M_i(\nu)$
by trigonometric polynomials\footnote{
Since the FCC mask divided by the Gaussian power spectrum is monotone increasing from $0$ to $10.6$GHz we can let
$\Gamma_1, \dots, \Gamma_4$  overlap.}
$\Gamma_i(\nu):=\sum_n \gamma_{i,n} \phi_n (\nu)$ of order $L$ in the $L^2$-norm \cite{BEJJ06}.
The semi-infinite linear constraints in \eqref{eq:semicons} describe a compact convex set \cite[(40),(41)]{DLS02}. To
see  this, let us introduce the following lower bound cones for $\theta \in [0,\frac{1}{2T_0}]$
\begin{align}
  K_{\text{low}}(\theta)&= \Set{\vr  \in \R^L \bigg| \sum_{k=0}^{L-1} r_{\vg,n} \phi_n(\nu) \geq 0 ,  \nu\in
  \left[\theta,\frac{1}{2T_0}\right]}.
\end{align}
For $\theta=0$ the positive cone $ K_0=K_{\text{low}}(0)$ defines the lower bound in \eqref{lbound} if we set
$T_0=\frac{1}{28\text{GHz}}$.  To formulate the non-constant upper bounds, one can use the approximation functions
$\Gamma_i(\nu)$ \cite{BEJJ06} given in the same basis $\vphi$ as  $\Betrag{\hvg(\nu)}^2$.  For each
$i\in\{1,\dots,5\}$ the bounds in \eqref{eq:semicons} are then equivalent to 
\begin{align}
  \sum_{n=1}^L (\gamma_{i,n} - r_{\vg,n} )\phi_n(\nu) \geq 0\quad\text{for}\quad \nu\in [\alpha_i,\beta_i].
\end{align}
For the upper bounds, we just have to set $\rho_{i,n}:=\gam_{i,n} - r_{\vg,n}$ for each $i=1,\dots,5$ and $n\geq 1$,
which leads to the upper bound cones 
\begin{align}
  K_{\text{up}}(\theta_i)&= \bigg\{\vr \in \R^L \bigg|  \sum_{n=0}^{L-1} \rho_{i,n} \phi_n (\nu)\geq 0, \nu\in
  \left[\theta_i,\frac{1}{2T_0}\right] \bigg\}, \\
  \bar{K}_{\text{up}}(\theta_i)&= \bigg\{\vr  \in \R^L\bigg|  \sum_{n=0}^{L-1} \rho_{i,n} \phi_n (\nu) \geq 0, \nu
  \in \left[0,\theta_i  \right]  \bigg\}.
\end{align}
The five upper bound cones $K_i$ are then 
\begin{align}
  \forall i=1,\dots,4: \ K_i &:= \bar{K}_{\text{up}} (\beta_i) \quad \text{and} \quad K_5 := {K}_{\text{up}} (\alpha_5).
\end{align}
Since the autocorrelation has to fulfill all these constraints, it has to be an element of the intersection.  After this
approximation\footnote{
The $\Gamma_i$ are approximations to the FCC mask with a certain error.  Also, $T_0$ is now fixed via the
frequency range $F$. If one wants to reduce $T_0$, one has to reformulate the cones, hence $\gamma_{i}$ and extend
the frequency band constraints. Increasing $T_0$ above $1/28\text{GHz}$ is not possible, if one wants to
respect the whole mask.}
we get from \eqref{eq:problemsub} the problem
\begin{align}
  \max_{\vr\in \bigcap_{i} K_i} \sum_{n=0}^{L-1} r_{\vg,n} c_n.\label{problemSDP}
\end{align}
This is now a convex optimization problem of a linear functional over a convex set.  By the \emph{positive real lemma}
\cite{DLS02}, these cone constraints can be equivalently described by semi-positive-definite matrix equalities, s.t. the
problem \eqref{problemSDP} is numerically solvable with the \namen{matlab} toolbox SeDuMi \cite{WBV99,Stu99}.  The
filter is obtained by a spectral factorization of $r_{\vg}$. Obviously $\vg$ is not uniquely determined.
\begin{figure}[ht]
  \begin{center}
  \includegraphics[scale=0.47,angle=-90]{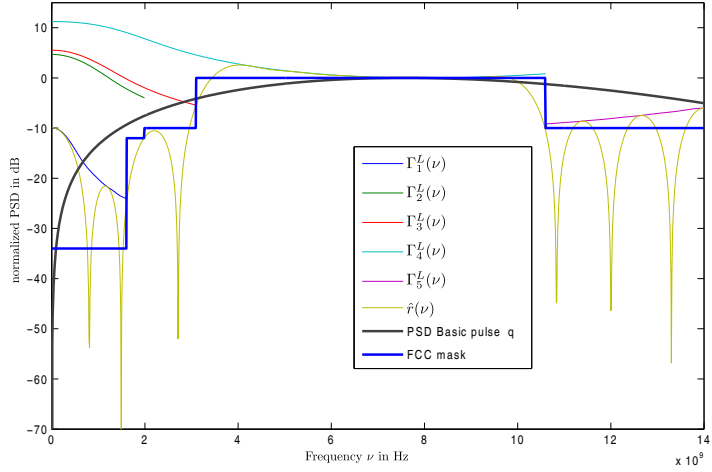}
  \caption{Fourier-approximations $\Gamma_i^L$ of $\mathcal{M}$ for $L=25$.}\label{fig:fit}
  \end{center}
\end{figure}

\noi Note that this optimization problem can also be seen as the maximization of a local $L^2$-norm, given as the NESP
value, under the constraints of local $L^\infty$-norms. %

\section{Orthogonalization of Pulse Translates}\label{sec:orthogonalization}

In \cite{TDLWG06,DK07} a sequential pulse optimization was introduced, which produces mutually orthogonal pulses $\po_m=
q \convdisTo \vgm$, i.e. $\SP{\po_m}{\po_n}=\del_{nm}$.
Here each pulse $\po_m$ is generated by a different FIR filter $\vgm\in\R^L$, which depends on the previously generated
pulses $\po_{1},\dots, \po_{m-1}$ and produces pulses in $L^2([-T_q/2,(L-1)T_0+T_q/2])$. This approach is similar to the
Gram-Schmidt construction in that it is order-dependent, since the first pulse $\po_1$ can be optimally designed to the
FCC mask without an orthogonalization constraint. 
We will now present a new order--independent method to generate from a fixed initial pulse $p$ a set of orthogonal
pulses $\{\po_m\}$.  Therefore we introduce a new time--shift $T>0$, namely  the PPM shift in \eqref{eq:uwbsignal}, to
generate a set of $N=2M+1$ translates $\{ p(\cdot -mT)\}_{m=-M}^M$, i.e. $M$ shifts in each time direction. The
orthogonal pulses are then obtained by linear combinations of the translates of the initial pulse $p$.  For a stable
embedding of the finite construction we restrict the initial pulses to the set $L^2_{T_p}:=L^2([-T_p/2,T_p/2])$ of
centered pulses with finite duration $T_p$. To study the convergence we need to introduce the concept of regular
shift--invariant spaces.
%
\subsection{Shift--Invariant Spaces and Riesz-Bases}\label{sec:shiftspaces}
%
To simplify notation we scale the time axis so that $T=1$.  Let us now consider the set $\PSIo(p):=\spann\{p(\cdot
-n)\}$ of all finite linear combinations of $\{p(\cdot -n)\}$, which is certainly a subset of $L^2$. The $L^2$-closure
of $\PSIo(p)$ is a \emph{shift--invariant} closed subspace $\PSI(p):=\cc{\PSIo(p)}\subset L^2$, i.e.  for each
$f\in\PSI(p)$ also $\{ f(\cdot -n)\} \subset \PSI(p)$.   Since $\PSI(p)$ is generated by a single function $p$ we call
it a \emph{principal shift--invariant} (PSI) space  and $p$ the \emph{generator} for $\PSI(p)$. In fact, $\PSI(p)$ is
the smallest PSI closed subspace of $L^2$ generated by $p$. Of course not every closed PSI space is of this form
\cite{BDR94b}. In this work we are interested in spaces which are closed under semi-discrete convolutions
\eqref{eq:semiconv}  with $\ell^2$ sequences, i.e. the space 
\begin{align}
  \SI(p):=\set{p\convdis \vc}{\vc\in \ell^2}
\end{align}
endowed with the $L^2$- norm. 
  Note that $\SI(p)$ is in general not a subspace or even a closed subspace of $L^2$ \cite{BDR94b,AS02b}. But to
  guarantee stability of our filter design $\SI(p)$ has to be closed, i.e. has to be a Hilbert subspace. More
  precisely, the translates of $p$ have to form a \emph{Riesz basis}. 
%

%

  \begin{defi}\label{def:rieszbasis}
  Let $\Hil$ be a Hilbert space. $\{e_n\} \subset\Hil$ is a Riesz basis for $\overline{\spann\{e_n\}}$ if
  and only if there are constants $0 < A\leq B<\infty$, s.t.
  \begin{align}
    A\Norm{\vc}^2_{\ell^2} \leq \Norm{ \sum_n c_n e_n}^2_{\Hil} \leq B \Norm{\vc}^2_{\ell^2} \quad\text{for all} \quad
    \vc\in \ell^2\label{eq:rieszconditionorig}. 
  \end{align}
\end{defi}
In this case $\overline{\spann \{e_n\} }$ becomes a Hilbert-subspace of $\Hil$. For SI spaces in $L^2=\Hil$ we get the
following result.%

\begin{prop}[Prop.1 in \cite{AU94}]\label{pro:au} 
  Let $p\in L^2(\R)$. 

 Then $\SI(p)$ is a closed shift--invariant subspace of $L^2$ if and only if 
  \begin{align} A
    \Norm{\vc}^2_{\ell^2} \leq \Norm{p \convdis \vc }^2_{L^2} \leq B \Norm{\vc}^2_{\ell^2} \quad\text{for all} \quad \vc\in
    \ell^2\label{eq:rieszconditionorig2} 
  \end{align} 
  holds for fixed constants $0 < A\leq B<\infty$. Moreover, $\{p(\cdot-n)\}$ is a Riesz-basis for $\SI(p)$.

\end{prop}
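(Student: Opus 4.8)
The plan is to pass to the frequency domain and reduce the two–sided estimate \eqref{eq:rieszconditionorig2} to an essential boundedness statement for the periodized power spectrum $\Phi_p(\nu):=\sum_{k}\Betrag{\hp(\nu+k)}^2$, the same quantity appearing under the square root in \eqref{eq:nyquist}. Since $\widehat{p\convdis\vc}(\nu)=\hp(\nu)\,m(\nu)$ with the $1$-periodic symbol $m(\nu):=\sum_n c_n e^{-2\pi i n\nu}$, Plancherel's theorem together with the splitting of $\R$ into the unit cells $[0,1)+k$ and the $1$-periodicity of $m$ gives
\[
\Norm{p\convdis\vc}_{L^2}^2=\int_\R \Betrag{\hp(\nu)}^2\Betrag{m(\nu)}^2\,d\nu=\int_0^1 \Betrag{m(\nu)}^2 \Phi_p(\nu)\,d\nu,
\]
while Parseval for Fourier series yields $\Norm{\vc}_{\ell^2}^2=\int_0^1\Betrag{m(\nu)}^2\,d\nu$. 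The map $\vc\mapsto m$ is a unitary isomorphism from $\ell^2$ onto $L^2([0,1])$, so ''for all $\vc\in\ell^2$'' becomes ''for all $m\in L^2([0,1])$''.

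With these identities, \eqref{eq:rieszconditionorig2} reads $A\int_0^1\Betrag{m}^2\,d\nu\le\int_0^1\Betrag{m}^2\Phi_p\,d\nu\le B\int_0^1\Betrag{m}^2\,d\nu$ for every $m\in L^2([0,1])$. Testing against normalized indicators $m=\mathbf 1_S/\sqrt{\Abs{S}}$ of the sets $S=\{\Phi_p>B+\eps\}$ and $S=\{\Phi_p<A-\eps\}$ shows that this holds if and only if $A\le\Phi_p(\nu)\le B$ for almost every $\nu\in[0,1]$. By \defref{def:rieszbasis} (with $e_n=p(\cdot-n)$, $\Hil=L^2$) the two-sided estimate is exactly the Riesz-basis condition, so the ''moreover'' part follows immediately and $\cspan{p(\cdot-n)}=\SI(p)$.

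For the equivalence itself, consider the synthesis operator $T\colon\ell^2\to L^2$, $T\vc=p\convdis\vc$, with $\range T=\SI(p)$. If the bounds hold, $T$ is bounded (upper estimate) and bounded below (lower estimate), hence a topological isomorphism onto its range; a bounded-below operator between Hilbert spaces has closed range, so $\SI(p)$ is a closed subspace, and shift-invariance is immediate since $(p\convdis\vc)(\cdot-m)=p\convdis(S^m\vc)$ for the integer shift $S^m$ on $\ell^2$. This settles the ''if'' direction.

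The converse is the heart of the matter and I expect it to be the main obstacle. Assume $\SI(p)$ is closed; by definition $p\convdis\vc\in L^2$ for every $\vc$, so $T$ is everywhere defined. Its graph is closed, because $\ell^2$- and $L^2$-convergence force a.e. convergence of the Fourier transforms along a subsequence, making the two limits compatible; the closed graph theorem then gives boundedness of $T$, i.e. $\Phi_p\le B$ a.e. The lower bound is the delicate step: since $\range T$ is closed, the closed-range theorem yields that $T$ is bounded below on $(\ker T)^\perp$, which translates to $\Phi_p\ge A$ a.e. on $\{\Phi_p>0\}$. The genuine difficulty is to exclude the degenerate case in which $\Phi_p$ vanishes on a set of positive measure — there the translates acquire $\ell^2$-relations and are not a Riesz basis — so one must argue that under the hypotheses $T$ is injective, equivalently $\Phi_p>0$ a.e., before the estimate upgrades to $\Phi_p\ge A$ on all of $[0,1]$. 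The cleanest way to expose the failure is contrapositive: if no positive essential lower bound exists, the normalized indicators $m_j=\mathbf 1_{\{\Phi_p<1/j\}}/\sqrt{\Abs{\{\Phi_p<1/j\}}}$ produce unit vectors $\vc^{(j)}$ with $\Norm{T\vc^{(j)}}_{L^2}^2\le 1/j\to 0$, contradicting the bounded-below property forced by closedness of $\range T$.
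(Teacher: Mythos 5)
The paper offers no proof of this proposition at all --- it is quoted from \cite{AU94} and used as a black box --- so there is nothing internal to compare your argument against; I judge it on its own merits. Your frequency-domain reduction is correct and standard: $\Norm{p\convdis\vc}^2_{L^2}=\int_0^1\Betrag{m}^2\Phi_p\,d\nu$ and $\Norm{\vc}^2_{\ell^2}=\int_0^1\Betrag{m}^2\,d\nu$, the indicator test shows \eqref{eq:rieszconditionorig2} is equivalent to $A\le\Phi_p\le B$ a.e., the ``if'' direction (bounded below $\Rightarrow$ closed range; shift-invariance from $(p\convdis\vc)(\cdot-m)=p\convdis(S^m\vc)$) is sound, and the ``Moreover'' part follows as you say.

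The gap sits exactly where you flag it, and your attempt to close it does not work. Closedness of $\range T$ forces $T$ to be bounded below only on $(\ker T)^\perp$, i.e.\ only for symbols $m$ supported on $\{\Phi_p>0\}$; it yields $\Phi_p\ge A$ a.e.\ on $\{\Phi_p>0\}$ and nothing more. Your contrapositive with $m_j=\mathbf 1_{\{\Phi_p<1/j\}}/\sqrt{\Abs{\{\Phi_p<1/j\}}}$ is circular: when $\{\Phi_p=0\}$ has positive measure these test vectors have large components in $\ker T$, and there is no ``bounded-below property forced by closedness of $\range T$'' on all of $\ell^2$ for them to contradict. Worse, the implication cannot be repaired as literally stated: take $\hp=\mathbf 1_{[0,1/2]}$, so that $\Phi_p=\mathbf 1_{[0,1/2]}$ on $[0,1)$. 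Then $\SI(p)$ is the Paley--Wiener space of $f$ with $\supp\hat f\subset[0,1/2]$ --- a closed, shift-invariant subspace of $L^2$ --- yet any $\vc$ whose symbol is supported in $(1/2,1)$ gives $p\convdis\vc=0$, so the lower bound in \eqref{eq:rieszconditionorig2} fails. The ``only if'' direction therefore requires the implicit convention (tacit in \cite{AU94} and in how the paper later uses the proposition) that closedness of $\SI(p)$ is taken together with injectivity of the synthesis map $\vc\mapsto p\convdis\vc$, equivalently $\Phi_p>0$ a.e.; once that is added as a hypothesis, your closed-graph/closed-range argument goes through verbatim. As written, though, the decisive step is asserted rather than proved, and the assertion is false in the degenerate case you yourself identified as the heart of the matter.
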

%
If the generator $p$ fulfills \eqref{eq:rieszconditionorig2}, then $\SI(p)= \PSI(p)$ by \cite{Jia97}
and we call $p$ a \emph{stable generator} and $\SI(p)$ a \emph{regular PSI space} \cite{BDR94b}.  An
\emph{orthonormal generator} \changes (\emph{square-root Nyquist pulse, shift-orthonormal pulse}) \changee $\po$ for $\SI(p)$  is a
generator with  $\SP{\po(\cdot -n)}{\po(\cdot -m)}=\del_{mn}$ for all
$n,m\in\Z$ \cite{And05,Lap09}.  
Benedetto and Li \cite{BL98} showed that the stability and orthogonality of a generator $p\in L^2$ can be described by
the absolute $[0,1]$-integrable periodic function $\Phi_p\in L^1([0,1])$  of $p$ defined for $\nu$ almost everywhere
(a.e.) as
\begin{align}
  \Phi_p (\nu):=\sum_k \Betrag{\hp (\nu +k)}^2. \label{eq:symbol}
\end{align}        
They could show the following characterization \cite{Chr03,BL98}.
%
\begin{thm}[Th. 7.2.3 in \cite{Chr03}]\label{th:riesz}
    A function $p\in L^2$ is a stable generator for $\SI (p)$ if and only if
    there exists $0<A\leq B<\infty$ such that
    \begin{align} 
      A \leq &\Phi_p (\nu)  \leq B \qquad \text{for $\nu$ a.e.} \label{eq:rieszcondition}
    \end{align}
      and is an orthonormal generator for $\SI(p)$ if and only if
      \begin{align}
      &\Phi_p(\nu) =1 \qquad \text{for $\nu$ a.e.. } \label{eq:orthocondition} 
    \end{align} 
\end{thm}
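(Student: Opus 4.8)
The plan is to translate both equivalences into the Fourier domain, where the periodization $\Phi_p$ in \eqref{eq:symbol} arises naturally, and then to read off the bounds. After scaling so that the translation step is $T=1$, write $\hvc(\nu):=\sum_n c_n e^{-i2\pi\nu n}$ for the $1$-periodic Fourier series of a sequence $\vc\in\ell^2$, and recall that $\vc\mapsto\hvc$ is a unitary isomorphism of $\ell^2(\Z)$ onto $L^2([0,1])$ (Riesz--Fischer); in particular $\Norm{\vc}_{\ell^2}^2=\int_0^1|\hvc(\nu)|^2\,d\nu$ and $\hvc$ exhausts $L^2([0,1])$. Since $\widehat{p(\cdot-n)}(\nu)=\hp(\nu)e^{-i2\pi\nu n}$, Plancherel's theorem together with the $1$-periodicity of $\hvc$ gives, after collecting the integral over $\R$ into a sum of integrals over unit cells,
\begin{align}
  \Norm{p\convdis\vc}_{L^2}^2=\int_\R|\hp(\nu)|^2\,|\hvc(\nu)|^2\,d\nu=\int_0^1\Phi_p(\nu)\,|\hvc(\nu)|^2\,d\nu .
\end{align}
The interchange of summation and integration is justified by nonnegativity of the integrand (Tonelli), and $\Phi_p\in L^1([0,1])$ because $\int_0^1\Phi_p=\Norm{\hp}_{L^2}^2=\Norm{p}_{L^2}^2<\infty$.

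For the first equivalence I would invoke \propref{pro:au}: $p$ is a stable generator precisely when the Riesz bounds \eqref{eq:rieszconditionorig2} hold for all $\vc\in\ell^2$. Substituting the identity above and $\Norm{\vc}_{\ell^2}^2=\int_0^1|\hvc|^2$, condition \eqref{eq:rieszconditionorig2} reads
\begin{align}
  A\int_0^1|\hvc(\nu)|^2\,d\nu\le\int_0^1\Phi_p(\nu)\,|\hvc(\nu)|^2\,d\nu\le B\int_0^1|\hvc(\nu)|^2\,d\nu ,
\end{align}
equivalently (by surjectivity of the Fourier series map) for every $g\in L^2([0,1])$ in place of $\hvc$. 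If \eqref{eq:rieszcondition} holds, i.e. $A\le\Phi_p\le B$ a.e., these inequalities are immediate. Conversely, given them for all $g$, I would test against $g=\mathbf 1_E$ for $E=\{\Phi_p<A\}$ and $E=\{\Phi_p>B\}$; positivity of either measure contradicts the corresponding bound, forcing $A\le\Phi_p\le B$ a.e. This is just the spectral characterization of the multiplication operator $g\mapsto\Phi_p\,g$ on $L^2([0,1])$.

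For the orthonormal case I would compute the inner products directly. By Plancherel and the same unit-cell folding,
\begin{align}
  \SP{p(\cdot-n)}{p(\cdot-m)}=\int_\R|\hp(\nu)|^2e^{-i2\pi\nu(n-m)}\,d\nu=\int_0^1\Phi_p(\nu)\,e^{-i2\pi\nu(n-m)}\,d\nu ,
\end{align}
so $\SP{p(\cdot-n)}{p(\cdot-m)}$ is exactly the $(n-m)$-th Fourier coefficient of $\Phi_p\in L^1([0,1])$. The orthonormality requirement $\SP{p(\cdot-n)}{p(\cdot-m)}=\del_{nm}$ therefore says that every Fourier coefficient of $\Phi_p$ vanishes except the zeroth, which equals $1$; these are precisely the Fourier coefficients of the constant function $1$, so uniqueness of Fourier coefficients on $L^1([0,1])$ yields \eqref{eq:orthocondition}, namely $\Phi_p=1$ a.e., and the converse is trivial.

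\textbf{Main obstacle.} The only nonroutine step is the converse of the first equivalence, i.e. deducing the pointwise bound $A\le\Phi_p\le B$ a.e. from the integral inequalities. This relies on the surjectivity of $\vc\mapsto\hvc$ onto $L^2([0,1])$ together with the measure-theoretic test-function argument on the level sets of $\Phi_p$; everything else (Plancherel, Tonelli, and uniqueness of Fourier coefficients) is standard bookkeeping.
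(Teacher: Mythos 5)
Your proof is correct, but note that the paper does not actually prove this theorem: its ``proof'' is a citation to Th.~7.2.3(ii),(iii) of \cite{Chr03}, plus the remarks that $B=1$ in the orthonormal case and that $\PSI(p)=\SI(p)$. What you have written out is precisely the standard argument behind that citation: the Plancherel/periodization identity $\Norm{p\convdis\vc}_{L^2}^2=\int_0^1\Phi_p|\hvc|^2$, the identification of the Riesz inequalities with the spectral bounds of the multiplication operator by $\Phi_p$ on $L^2([0,1])$ (with the level-set test functions giving the converse), and the reading of $\SP{p(\cdot-n)}{p(\cdot-m)}$ as the $(n-m)$-th Fourier coefficient of $\Phi_p\in L^1([0,1])$. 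So you supply content the paper outsources, and your argument matches the cited source's route rather than introducing a new one. The only point worth tightening is the very first identity: for a general $\vc\in\ell^2$ the series $p\convdis\vc$ need not converge in $L^2$ a priori, so one should establish the identity for finitely supported $\vc$ (where everything is unambiguous) and then extend by density once the upper bound is available; with that caveat, all steps (Tonelli, surjectivity of $\vc\mapsto\hvc$ onto $L^2([0,1])$, uniqueness of Fourier coefficients on $L^1$) are sound.
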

%

\begin{proof}
For a proof see Th. 7.2.3. (ii) and (iii) in \cite{Chr03}. In our special case we have $B=1$. Note, that the Riesz
sequence and orthonormal sequence are bases for their closed span, meaning that in our case $\PSI(p)=\SI(p)$.
\end{proof}

Due to this characterization in frequency there is a simple ``orthogonalization trick'' for a stable generator given in
\eqref{eq:nyquist}, which was found by Meyer, Mallat, Daubechies  and others \cite{Mey86},\cite[Prop. 7.3.9]{Chr03}.
Unfortunately, this does not provide an ``a priori'' construction in the time domain and does not lead to a support control
of the orthonormal generator in time, as necessary for UWB-IR. 
%
%

Contrary to an approximation in the frequency-domain we approach an approximation in time-domain via the Löwdin
transformation.
We will show that in the limit the Löwdin transformation for shift--sequences is in fact given in frequency by the
orthogonalization trick \eqref{eq:nyquist}.  By using finite section methods we establish  an approximation method
in terms of the discrete Fourier transform (DFT) to allow an easy computation.  Furthermore, we show that the Löwdin
construction for stable generators is unique and optimal in the $L^2$-distance among all orthonormal generators and
corresponds to the canonical construction of so called tight frames (given later).

\subsection{Löwdin Orthogonalization for Finite Dimensions}\label{sec:lofinite}

Since the Löwdin transformation is originally defined for a finite set of linearly independent elements in a Hilbert space
$\Hil$, we will use the \emph{finite section method} to derive a stable approximation to the infinite case. For this we
consider for any $M\in\N$ the symmetric orthogonal projection $\vPM$ from $\ell^2$ to $\ell^2_M=\{\vc\in \ell^2|\supp
\vc\!  \subset\!\{-\!M,\dots,M\}\}$ defined for $\vc\in\ell^2$ by $\vPM \vc := \vc^M=(0, \dots, 0,
c_{-M},\dots,c_{M},0,\dots,0)$. Then the finite section $\GramM$ of the infinite dimensional Gram matrix 
$\Gram$ of $p\in L^2$, given by 
\begin{align}
  \left[ \Gram \right]_{nm} &:= \SP{p(\cdot-m)}{p(\cdot -n)} \quad\text{for}\quad n,m\in\Z \label{eq:Graminf},
\end{align}
can be defined as $\GramM:=\vPM \Gram \vPM$, see \cite[Prop. 5.1.5]{Gro01}.
If $p$ satisfies \eqref{eq:rieszconditionorig} and if we restrict the semi-discrete convolution $p \convdis \vc$ to
$\ell^2_M$, we obtain a $2M+1$ dimensional Hilbert subspace $\SI^M (p)$ of $\SI(p)=\Hil$.  Then the unique linear
operation $B^M$, which generates from $\{p(\cdot -m)\}_{m=-M}^M$ an \emph{orthonormal basis}
(ONB) $\{\poM_m\}_{m=-M}^M$ for $\SI^M (p)$ and simultaneously minimizes
\begin{align}
  \sum_{m=-M}^{M} \Norm{B^M p(\cdot -m) -p(\cdot -m)}_{L^2}^2      \label{eq:mep:finite}
\end{align}
is given by the (symmetric) Löwdin transformation \cite{Low50,Loe70,Lan36,Jun07}
  \begin{mycomment}
  \footnote{Note that in Physics, the first input of the inner product is complex
  conjugated, so that the Gram matrix is then defined as $[\Gram]_{nm}=(f_n,f_m)_{Physics}$.}
  \end{mycomment}
and can be represented in matrix form as

\begin{align}
  \poM_m := B^M p(\cdot -m) =\sum_{n=-M}^{M} [\GramMinv ]_{mn} p(\cdot -n) \quad\text{for all}\quad  m\in\{-M,\dots,M\}
  ,\label{eq:co}
\end{align}
where we call each $\poM_m$ a \emph{Löwdin orthogonal (LO) pulse} or \emph{Löwdin pulse}.

Here $\GramMinv$ denotes the (canonical)  inverse square-root (restricted to $\ell^2_M$) of
  $\GramM$.  Note that $\GramMinv$ is not equal to $\vPM \Graminv\vPM$.
Since the sum in \eqref{eq:co} is finite, the definition of the Löwdin pulses is also pointwise well-defined. In the
next section, we will see that this is a priori not true for the infinite case.  If we identify the corresponding $m$'th
row of the inverse square-root of $\GramM$ with vectors $\vh^{M}_{m}=([\GramMinv]_{m \,-\!M}, \dots,[\GramMinv]_{m \,
+\! M})$ we can describe \eqref{eq:co} by a FIR filter bank as 
\begin{align} 
  \poMm = p \convdis \vh^{M}_{m}\quad,\quad m\in\{-M,\dots,M\}.\label{eq:lofilter}
\end{align}

Unfortunately, none of these Löwdin pulses is a shift-orthonormal pulse, which would be necessary for an OOPPM
transmission. In the next section we will thus show that the Löwdin orthogonalization converges for $M$ to infinity to
an IIR filter ${\vbb}$ given as the centered row of $\Graminv$. This IIR filter generates then a \changes
shift-orthonormal \changee pulse, namely the \changes square-root \changee Nyquist pulse defined in \eqref{eq:nyquist}.
Hence the Löwdin orthogonalization \eqref{eq:co} provides an approximation to our OOPPM design.
In the following we will investigate its stability, i.e. its convergence property.

%

\subsection{Stability and Approximation}\label{sec:loinfinite}

In this section we investigate the limit of the Löwdin orthogonalization in \eqref{eq:co} for translates (time-shifts)
of the optimized pulse $p$ with time duration $T_p <\infty$  where we further assume that $p$ is a bounded stable
generator.  If we set $K:=\lfloor T_p\rfloor$ then certainly $p\in L^2_K$.  In this case the auto--correlation of $p$ 
\begin{align}
  r_p(t):=(p*\cc{p}_{-})(t)= \int_\R p(\tau) \cc{p(\tau-t)} d\tau\label{eq:autocor},
\end{align}
with the time reversal $p_{-}(t):= p(-t)$ is a compactly supported bounded function on $[-K,K]$.  Due to the Poisson
summation formula we can represent $\Phi_p$ almost everywhere by the Fourier series \eqref{eq:dft} ($T_0=1$) of the samples
$\{r_p(n)\}$
\begin{align}
  \Phi_p(\nu)= \sum_{n=-\infty}^\infty  r_p (n) e^{-2\pi i n \nu} 
  = \sum_{n=-K}^K [\Gram]_{n0}e^{-2\pi i n \nu} 
   \label{eq:symbolpoisson},
\end{align}
which is the symbol of the Toeplitz matrix $\Gram$, since we have from \eqref{eq:Graminf} and \eqref{eq:autocor}
that $r_p(n-m)=[\Gram]_{nm}$. Moreover the symbol is continuous since the sum is finite due to the compactness of $r_p$.

On the other hand the initial pulse $p$ is a Wiener function\footnote{
Wiener functions are locally bounded in $L^\infty$ and globally in $\ell^1$.  }
\cite[Def. 6.1.1]{Gro01}
so that $\Phi_p$ defines a continuous function  and condition \eqref{eq:rieszcondition} holds pointwise
\cite[Prop.1]{AST01}. Since both sides in \eqref{eq:symbolpoisson} are identical a.e.  they are identical everywhere by
continuity (see also \cite[p.105]{Gro01}). Thus, the spectrum of $\Gram$ is continuous, strictly positive and bounded by
the Riesz bounds.  Hence the inverse square-roots of $\Gram$ and $\GramM$ exists s.t. for any $M\in\N$ (by Cauchy's
interlace theorem, \cite[Th.  9.19]{BG05})
\begin{align}
  A\eins_M \leq \Gram_M \leq B\eins_M \quad\text{and}\quad \frac{1}{B} \eins_M \leq \GramMi \leq \frac{1}{A}
        \eins_M\label{eq:finitebound},
\end{align}
where $\eins_M$ denotes the identity on $\ell^2_M$. Now we can  approximate the Gram matrix by \namen{Strang}'s
circulant preconditioner \cite{Str86}, s.t. the diagonalization is given by a discrete Fourier transform (DFT)
\cite{Dav79}.  To get a continuous formulation of %
the approximated Löwdin pulses we use the \namen{Zak} transform \cite{Jan88}, given for a continuous function $f$ as 
\begin{align}
  (\Zak f) (t,\nu) := \sum_{n\in\Z} f(t-n) e^{2\pi i n\nu}\label{eq:zak}\quad\text{for}\quad t,\nu \in\R.
\end{align}
Our main result is the following theorem.
\begin{thm}\label{th:lozak}
  Let $K\in\N$ and  $p\in L^2_K$ be a continuous stable generator for $\SI(p)$.  Then we can approximate the limit set
  of the Löwdin pulses $\{\po_m\}$ by a sequence of finite function sets $\{\tp_m^{\circ,M}\}_{m=-M}^M$, which are
  approximate Löwdin orthogonal (ALO). The functions $\tp_m^{\circ,M}$ are given pointwise for $M\geq K$ and
  $m\in\{-M,\dots,M\}$ by the Zak transform as
  \begin{align}
      \tpoMm\! (t)\! :=\! \begin{cases}
        \frac{1}{2M\!+\!1} \overset{2M}{\underset{{l=0}}{\sum}}
        \frac{e^{\frac{-2\pi i ml}{2M+1}}(\Zak p)(t,\frac{l}{2M+1})}
          {\sqrt{(\Zak r_p)(0,\frac{l}{2M+1})}}
              & \Betrag{t}\!\leq\! M\!-\!\frac{K}{2}\\
          0 &\text{else}\end{cases},\label{eq:zakpo}
  \end{align}
  such that for each $m\in\Z$
  \begin{align}
      \tpo_m (t) =\lim_{M\to\infty}\tpoMm (t) \label{eq:zakpolimit}
  \end{align}
  converges pointwise on compact sets. The limit in \eqref{eq:zakpolimit} can be stated as 
  \begin{align}
    \htpo (\nu) &= \hp (\nu)\cdot \left(\Phi_p(\nu)\right)^{-\frac{1}{2}}  \label{eq:lofreq}
  \end{align}
  for $\nu\in\R$ in the frequency-domain. Hence the Löwdin generator $\tpo:=\tpo_0$ is an orthonormal generator for
  $\SI(p)$.
\end{thm}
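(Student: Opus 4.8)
The plan is to compute the ALO pulses in closed form by replacing the finite Toeplitz section \GramM of the Gram matrix with \namen{Strang}'s circulant preconditioner $C_M$, diagonalizing the latter by the DFT, and then letting $M\to\infty$ through a Riemann-sum argument. First I would use the bandedness of \Gram: since $r_p$ is supported in $[-K,K]$, the symbol of \Gram is the trigonometric polynomial \Phi_p of \eqref{eq:symbolpoisson}, and for $M\geq K$ (so that $N=2M+1>2K$) the circulant $C_M$ obtained from the central band of \GramM is diagonalized by the DFT with eigenvalues equal to the \emph{exact} samples $\lambda_l=\Phi_p(l/N)=(\Zak r_p)(0,l/N)$, $l=0,\dots,2M$; here the periodization does not wrap because the bandwidth $K$ is strictly less than $N/2$. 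By the Riesz bounds \eqref{eq:rieszcondition} these satisfy $0<A\leq\lambda_l\leq B$, so $C_M$ is positive definite and $C_M^{-1/2}$ is well defined with entries $[C_M^{-1/2}]_{mn}=\tfrac{1}{N}\sum_{l=0}^{2M}\lambda_l^{-1/2}e^{-2\pi i(m-n)l/N}$.

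Defining the ALO pulses in analogy with \eqref{eq:co} by $\tpoMm:=\sum_{n=-M}^{M}[C_M^{-1/2}]_{mn}\,p(\cdot-n)$, I would substitute this diagonalization and interchange the two finite sums to get $\tpoMm(t)=\tfrac{1}{N}\sum_{l=0}^{2M}\tfrac{e^{-2\pi iml/N}}{\sqrt{\Phi_p(l/N)}}\sum_{n=-M}^{M}p(t-n)e^{2\pi inl/N}$. The inner sum is a truncation of the Zak transform $(\Zak p)(t,l/N)$ from \eqref{eq:zak}. Using that $p$ is supported in $[-K/2,K/2]$, I would check that for $|t|\leq M-K/2$ every term with $|n|>M$ vanishes (then $t-n$ lies outside $[-K/2,K/2]$), so the truncated sum coincides with the full $(\Zak p)(t,l/N)$; this yields exactly \eqref{eq:zakpo}, the cut-off to $0$ for $|t|>M-K/2$ recording the finite support of the approximation.

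To obtain the limit \eqref{eq:zakpolimit} I would fix $m\in\Z$ and $t$ in a compact set and take $M$ so large that $|t|\leq M-K/2$; then $\tpoMm(t)$ is precisely the Riemann sum, over the uniform partition $\{l/N\}_{l=0}^{2M}$ of $[0,1)$, of the function $\nu\mapsto e^{-2\pi im\nu}(\Zak p)(t,\nu)\,\Phi_p(\nu)^{-1/2}$. Since $p$ is continuous and compactly supported, $(\Zak p)(t,\cdot)$ is a finite trigonometric sum and hence continuous, while \Phi_p is continuous and bounded below by $A>0$; the integrand is therefore continuous and the Riemann sums converge to $\int_0^1 e^{-2\pi im\nu}(\Zak p)(t,\nu)\Phi_p(\nu)^{-1/2}\,d\nu$. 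Using the reconstruction identity $\int_0^1(\Zak f)(t,\nu)e^{-2\pi im\nu}\,d\nu=f(t-m)$ together with the fact that the Zak transform turns the semi-discrete convolution $\tpo=p\convdis\vbb$ (with symbol $\hat{\vbb}=\Phi_p^{-1/2}$, the centered row of \Graminv) into multiplication by the symbol, i.e. $\Phi_p^{-1/2}(\Zak p)(t,\cdot)=(\Zak\tpo)(t,\cdot)$, this integral equals $\tpo(t-m)$. This proves pointwise convergence on compact sets and, in frequency, $\htpo(\nu)=\hp(\nu)\Phi_p(\nu)^{-1/2}$, i.e. \eqref{eq:lofreq}.

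Finally, to see that $\tpo$ is an orthonormal generator for \SI(p), I would insert \eqref{eq:lofreq} into the bracket function and use the $1$-periodicity of \Phi_p: $\Phi_{\tpo}(\nu)=\sum_k|\hp(\nu+k)|^2/\Phi_p(\nu+k)=\Phi_p(\nu)^{-1}\sum_k|\hp(\nu+k)|^2=1$, so that \eqref{eq:orthocondition} of Theorem~\ref{th:riesz} applies; since $\htpo$ agrees with \eqref{eq:nyquist}, $\tpo$ generates the same regular PSI space \SI(p). I expect the main obstacle to be Step~1 — verifying that for the banded section the circulant eigenvalues are \emph{exactly} the samples $\Phi_p(l/N)$, so that no spectral error is incurred and positivity is inherited from the uniform Riesz bounds — together with the bookkeeping in the limit, where the single parameter $M$ controls both the support cut-off $|t|\leq M-K/2$ and the mesh of the Riemann sum, so one must confirm that for each fixed $(t,m)$ the cut-off is eventually inactive and the convergence is genuinely that of a Riemann sum of a continuous integrand.
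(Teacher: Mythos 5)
Your proposal is correct, and its first half (Strang circulant with exact eigenvalues $\Phi_p(l/N)$ for $M\geq K$, DFT diagonalization, support bookkeeping to replace the truncated sum by the full Zak transform) coincides with the paper's derivation of \eqref{eq:zakpo}. Where you genuinely diverge is in the convergence argument. The paper first invokes operator-theoretic machinery --- Gray's weak-norm convergence $\Norm{\tGramM-\GramM}_w\to 0$, the corresponding statement for the inverse square roots, and the \emph{finite strong convergence} $\tGramMinv\vP_M\vc\to\Graminv\vc$ for finitely supported $\vc$ from \cite{SJB03} --- to guarantee that the limit exists and equals $\Graminv$ applied to $\vp(t)$, and only then evaluates that limit as a Riemann integral. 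You skip the operator step entirely and observe that \eqref{eq:zakpo} \emph{is already} a Riemann sum of the continuous, $1$-periodic integrand $\nu\mapsto e^{-2\pi i m\nu}(\Zak p)(t,\nu)\Phi_p(\nu)^{-1/2}$ (continuity and positivity of the denominator coming from the Riesz bounds), so convergence to the integral is elementary; the identification with the Löwdin limit is then made through the symbol calculus, $\hat{\vbb}=\Phi_p^{-1/2}$ being the symbol of the Laurent operator $\Graminv$. This is cleaner and proves \eqref{eq:zakpolimit}, \eqref{eq:lofreq} and the orthonormality claim with less machinery. The one thing your route does not deliver --- and which the paper's finite-section argument is really there for --- is the statement (made in the abstract and implicit in ``approximate the limit set of the Löwdin pulses'') that the \emph{true} finite Löwdin pulses $\poMm=\sum_n[\GramMinv]_{mn}\,p(\cdot-n)$ converge to the same limit as the ALO pulses; your Riemann-sum argument says nothing about $\GramMinv$ itself, so if that claim is taken as part of the theorem you would still need the convergence of the inverses of the Toeplitz sections, e.g.\ as in \cite{SJB03}.
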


\begin{proof}

The proof consists of two parts. In the first part we derive an straightforward finite construction in the time domain
to obtain time-limited pulses \eqref{eq:zakpo} being approximations to the Löwdin pulses. Using Strang's circulant
preconditioner the ALO pulses can be easily derived in terms of DFTs. In the second part we will then show that this
finite construction is indeed a stable approximation to the \changes square-root \changee Nyquist pulse. Here we need
pointwise convergence, i.e.  convergence in $\ell^\infty$ (the set of bounded sequences). Finally, to establish the
shift-orthogonality we use properties of the Zak transform.

Since the inverse square-root of a $N\times N$ Toeplitz matrix is hard to compute, we approximate for any $M\geq K$
 the Gram matrix $\GramM$ by using Strang's circulant preconditioner $\tGramM$
  \cite{Str86,CJ07}.  Moreover, the Gram matrix is hermitian and banded such that we can define the elements of the
  first row by \cite[(4.19)]{Gra06} as
\begin{align}
    [\tGramM]_{0n}:= \begin{cases}
        r_p(n) & n\in \{0,\dots,K\}\\
        r_p(N-n) & n\in\{N-K,\dots, N-1\}\\
        0 & \text{else}
    \end{cases}.\label{eq:strang}
\end{align}
Here we abbreviate $N:=2M+1$.  The crucial property of Strang's preconditioner $\tGramM$ is the fact that the
eigenvalues $\lam_l(\tGramM)$ are sample values of the symbol $\Phi_p$ in \eqref{eq:symbol}. This special property is in
general not valid for
other circulant preconditioners \cite{CJ07}. To see this, we derive the eigenvalues by
\cite[Theorem~7]{Gra06} as
\begin{align}    
  \tlam_l^{M}:=\lam_l(\tGramM)&=\sum_{n=0}^K r_p(n) e^{-2\pi i n\frac{l}{N}} + \sum_{n=N-K}^{N-1} r_p (N-n) e^{-2\pi i
  n\frac{l}{N}  } \quad \text{for} \quad l \in\{0,\dots,2M\}\label{eq:eigenvaluedecomp} 
\end{align}
by inserting the first row of $\tGramM$ given in \eqref{eq:strang}. If we set in the second sum $n'=n-N$ we get from
\eqref{eq:symbolpoisson} 
\begin{align}
  \tlam^{M}_l& =\sum_{n=-K}^K r_p (n) e^{-2\pi i l \frac{n}{N}} = \Phi_p
  \left(\frac{l}{2M+1}\right)\label{eq:eigenvaluecp}.
\end{align}
Since $p$ is compactly supported the symbol $\Phi_p$ is continuous and the second equality in \eqref{eq:eigenvaluecp}
holds pointwise.  Moreover, the Riesz bounds \eqref{eq:rieszcondition} of $p$ guarantee that $\tGramM$ is strictly
positive and invertible for any $M$. Now we are able to define the ALO pulses in matrix notation by setting\footnote{By
slight abuse of our notation we understand in this section any matrix as an $N\times N$ matrix and $\vpM(t),\tvpoM(t)$
as $N$-dim. vectors for any $t\in\R$.}\label{footabuse} in
$\vpM(t):=\{p(t\!-\!n)\}_{n=-M}^M=(p(t\!+\!M),\dots,p(t\!-\!M))^T$ for any $t\in\R$
\begin{align}
  \tvpoM(t):=  \tGramMinv \vpM(t)&= \FmatrixM \tDiagMinv \FmatrixMa \vp^M (t),\label{eq:alo}
\end{align}

since the circulant matrix $\tGramM = \FmatrixM \tDiagM \FmatrixMa$ can be written by the unitary
$N\times N$ DFT  matrix $\FmatrixM$, with
\begin{align}
  &[\FmatrixM ]_{nm} :=\frac{1}{\sqrt{N}} e^{-2\pi i \frac{nm}{N}}\quad\text{with}\quad n,m \!\in\!\{0,\dots,2M\} 
\end{align}

and the diagonal matrix $\tDiagM$ of the eigenvalues of $\tGramM$.

Let us start in \eqref{eq:alo} from the right by applying the IDFT matrix $\FmatrixMa$,  then we get for
any $k$th component with $k\in\{0,\dots,2M\}$
\begin{align}
  [\FmatrixMa\vp^M (t)]_k
&=\frac{1}{\sqrt{N}} \Bigg(\sum_{n=0}^M  e^{2\pi i  \cdot \frac{n}{N} k} \cdot p(t+M-n) 
  + \sum_{n=M+1}^{2M} e^{2\pi i \cdot \frac{n}{N} k} \cdot p(t+M-n) \Bigg)\label{eq:idftapp}\\
  &\!\!\! \overset{\overset{j=n-M}{\downarrow}}{=}
\frac{1}{\sqrt{N}} \sum_{j=-M}^{M} e^{2\pi i \cdot \frac{j+M}{N} k} \cdot p(t-j)\label{eq:idftapp2}.
\end{align}
Next we multiply with the components 
$[\tDiagMinv]_{kl}=\del_{kl}/\sqrt{\tlam^{M}_l}$ of the inverse square-root of the diagonal matrix $\tDiagM$
\begin{align}
    \left[ \tDiagMinv \FmatrixMa\vp^M (t) \right]_l &=\frac{1}{\sqrt{N}} \sum_{j,k} \left(
    \frac{\del_{kl}}{\sqrt{\tlam_l^{M}}} e^{2 \pi i \frac{j+M}{N} k} \cdot p(t-j) \right) = \frac{1}{\sqrt{N }} \left( \sum_{j=-M}^M \frac{e^{2 \pi i \frac{j+M}{N}l}  \cdot p(t-j)}{\sqrt{\tlam^M_l}}
    \right).
\end{align}
In the last step we evaluate the DFT at $m\in\{0,\dots,2M\}$
\begin{align}
  \tpoMnM (t)=[\tvpoM (t)]_m 
  &= \frac{1}{N} \sum_{l=0}^{2M}
  \left(  \frac{e^{-2\pi i m \cdot\frac{l}{N}}}{\sqrt{\tlam_l^M}} \sum_{j=-M}^M p(t-j) e^{2 \pi i l \frac{j+M}{N}}
  \right) = \frac{1}{N} \sum_{l=0}^{2M}  e^{-2\pi i l\frac{m-M}{N}} \frac{\sum_{j=-M}^M p(t-j) e^{2\pi i l \cdot
  \frac{j}{N}}}{\sqrt{(\Zak r_p)(0,\frac{l}{N})}}.\label{eq:lastfinite}
\end{align}
where we used the Zak transform \eqref{eq:zak} of $r_p$ to express the eigenvalues $\tlam^M_l$. %
In the next step we extend the  DFT sum of the numerator in \eqref{eq:lastfinite} to an infinite sum. This is possible
since $p(\cdot-k)$ always has the same support length $K$ for each $k\in \Z$.  Thus, for all $\Betrag{t}>M-\frac{K}{2}$
the non-zero sample values are shifted in the kern of $\vP_M$; hence $\vpM(t)=0$. On the other hand for $\Betrag{t}\leq
M-\frac{K}{2}$ any shift $\Betrag{j}>M$ results in $p(t-j)=0$.  If we also set for each $M$ the index
  $k:=m-M$ in \eqref{eq:lastfinite} then the continuous ALO pulses can be written as
\begin{align}
  \tpoMk (t)= 
    \begin{cases}
        \frac{1}{N} \sum_l e^{-2\pi i l\frac{k}{N}} \frac{(\Zak p)(t,\frac{l}{N})}{\sqrt{(\Zak r_p)(0,\frac{l}{N})}} &
        \Betrag{t}\leq M\!-\!\frac{K}{2} \\
        0 & \text{else}
    \end{cases}.\label{eq:apporthotrafo}
\end{align}
This defines for each $k$ the operation $\tilde{B}_k^M$ by $\tpoMk = \tilde{B }_k^{M} p$.
For each $t\in\R$ the term  $\frac{(\Zak p)(t,\nu)}{\sqrt{(\Zak \tilde{r})(0,\nu})}$ is  a continuous function
in $\nu$ since the Zak transforms are finite sums of continuous functions and the nominator vanishes nowhere. This is
guaranteed by the positivity and continuity of $\Phi_p$ due to \eqref{eq:finitebound}. Hence the ALO pulses are
continuous as well.

 The second part of the proof shows the convergence of our finite construction to a \changes square-root \changee
 Nyquist pulse.  Therefore we use the \emph{finite section method} for the Gram matrix.  Gray showed \cite[Lemma
 7]{Gra06} that
\begin{align}
  \Norm{\tGramM - \GramM}_{w} \to 0
\end{align}
as $M \to \infty$ in the weak norm $\Norm{A}^2_{w}:=1/N \sum_j \lambda_j^2(A)$ implying  weak convergence of the
operators.  Since  $\tGramM$ is strictly positive for each $M\in\N$ we get by \cite{GC08} 
\begin{align}
  \Norm{\tGramMinv - \GramMinv }_{w} \to 0.\label{eq:ae}
\end{align}
Unfortunately this does not provides a strong convergence, which is necessary to state convergence in $\ell^2$ %
\begin{align}
   \tGramMinv \vP_M \vc \to \Graminv \vc \quad \text{for any } \vc\in\ell^2   \label{eq:tgramconv}.
\end{align}    
However, from \cite{SJB03} \emph{finite strong convergence} can be ensured, i.e. convergence of \eqref{eq:tgramconv} for
all $\vc\in\ell^2_{M'}$ for each $M'\in\N$. But for any $t\in\R$ 
 there exists an $M'$ sufficiently large, due to the compact support property of $p$, such that
 $\vc=\vp(t):=\{p(t-n)\}\in\ell^2_{M'}$. This is in fact sufficient,
since it implies pointwise convergence in $\ell^\infty_{M'}$ of \eqref{eq:tgramconv}, i.e. component-wise
convergence for each $t\in\R$.  Let us take for each  $t\in\R$ the number $M'\in\N$ such that
$\max\{\Betrag{t},K\}\leq M'$. Then we can define the limit of the $k$th component  as 
\begin{align}
  \tpo_k (t) := \lim_{M\to \infty}  \tpoMk (t) .\label{eq:polimit}
\end{align}
If we define $\Del \nu= \frac{1}{2M+1}$ and $\nu_l=l \Del\nu$  we can write for \eqref{eq:polimit} by
inserting \eqref{eq:apporthotrafo} %
\begin{align}
  \lim_{M\to\infty} \tpoMk (t) &= \lim_{M\to\infty} \sum_{l=0}^{2M} \frac{e^{-2\pi i k \nu_l}(\Zak p)(t,\nu_l)}{\sqrt{(\Zak
  r_p)(0,\nu_l)}} \Del \nu.\label{eq:limitZak}
\end{align}
Using the quasi-periodicity \cite[(2.18),(2.19)]{Jan88} of the Zak transform for $k\in\Z$  we have for any
$t,\nu\in\R$ 
\begin{align}
    (\Zak p(\cdot -k) )(t,\nu) &= (\Zak p)(t-k,\nu) =e^{-2\pi i \nu k} (\Zak p)(t,\nu)
  \label{eq:zakshift}.
\end{align}
We can express the partial sum on the right hand side of \eqref{eq:limitZak} in the limit as a Riemann integral for each $t\in\R$ 
\begin{align}
  \tpo_k (t) &:=\lim_{M\to\infty} \tpoMk (t) = \int_0^1 \frac{(\Zak p(\cdot-k)) (t,\nu)}{\sqrt{(\Zak r_p)(0,\nu)}} d\nu.
  \label{eq:ortholimit}
\end{align}
This shows that $\tilde{B}^{M}_k p$ converge pointwise for $M\to\infty$ to $\tilde{B}_k p = \tilde{B_0} p(\cdot
-k)=\tpo_k$ for each $k$.
The sequence $\{\tpo_k\}$
is then generated by shifts \changes of the centered pulse \changee $\tpo:=\tpo_0 $ since
the shift operation commutes with $\tilde{B}:=\tilde{B_0}$. This in turn commutes with the Zak
transformation,
i.e. for all $t\in\R$ we have
\begin{align}
    (\tilde{B}p)(t\!-\!k) &= \int \frac{(\Zak p)(t\!-\!k,\nu)}{\sqrt{(\Zak r_p)(0,\nu)}}d\nu = \int
    \frac{(\Zak p(\cdot\! -\!k))(t,\nu)}{\sqrt{(\Zak r_p)(0,\nu)}}d\nu \\
  &=  (\tilde{B} p(\cdot\!-\!k)) (t).
\end{align}
From \eqref{eq:ortholimit} it is now easy to show that $\tpo$ is an orthonormal generator. We write the left
hand side of \eqref{eq:ortholimit} in the Zak domain, by applying the Zak transformation%
\footnote{A similar result is also known in the context of Gabor frames, see also \cite[8.3]{Gro01}.}
to $\tpo$
\begin{align}
    (\Zak \tpo) (t,\nu) &=  \frac{(\Zak p) (t,\nu)}{\sqrt{(\Zak r_p)(0,\nu)}} \label{eq:cozak}.
\end{align}
If we multiple \eqref{eq:cozak} by the exponential and integrate over the time we yield for every $\nu\in\R$
\begin{align}
  \int_0^1 e^{-2\pi i \nu t} (\Zak \tpo) (t,\nu) dt &=  \int_0^1  \frac{e^{-2\pi i \nu t}\cdot (\Zak
  p)(t,\nu)}{\sqrt{\Zak r_p (0,\nu)}} dt.\label{eq:zakinverse}
\end{align}
Since $\Phi_p=(\Zak r_p )(0,\cdot)$ is time-independent we get the ``orthogonalization trick'' \eqref{eq:nyquist} by
using in \eqref{eq:zakinverse} the inversion formula \cite[(2.30)]{Jan88} of the Zak transform
\begin{align}
  \htpo (\nu) = &\hp (\nu) \cdot \left(\Phi_p(\nu)\right)^{-\frac{1}{2}}=\hpo(\nu).
\end{align}
Again, this is also defined pointwise  since the right hand side is continuous in $\nu$. It can now be easily verified
that $\tpo$ fulfills the \changes shift-orthonormal \changee condition \eqref{eq:orthocondition}, which shows that
$\tpo$ is an orthonormal generator for $\SI(p)$.
\end{proof}
\begin{bemerkung}
  Note, that relation \eqref{eq:zakshift} induces a time-shift. To apply this to the ALO pulses in
  \eqref{eq:apporthotrafo} the time domain has to be restricted further. Hence the ALO pulses do not have global shift
  character for finite $M\in\N$, but locally, i.e. $\tpoMk$ shifted back to the
  center matches $\tpoM$ for $t\in[-M+\frac{K}{2}+\Betrag{k},M-\frac{K}{2}-\Betrag{k}]$:
  \begin{align}
  \tpoMk (t+k) &= \frac{1}{N} \sum_l e^{-2\pi i \frac{l}{N}k} \frac{\sum_{n=-M}^M p(t+k-n) e^{2\pi i
  \frac{l}{N}n}}{\sqrt{\tlam^M_l}}=\frac{1}{N} \sum_l  
        \frac{\sum_{n=-M}^M p(t+k-n) e^{2\pi i \frac{l}{N}(n-k)}}{\sqrt{\tlam^M_l}}\\
      &=\frac{1}{N} \sum_l  \frac{\sum_{n'=-M-k}^{M-k} p(t-n') e^{2\pi i \frac{l}{N}n'}}{\sqrt{\tlam^M_l}}.
      \intertext{Since $p(t+M+\Betrag{k})=0$ and $p(t-M-\Betrag{k})=0$ for $\Betrag{k}<M$ and $\Betrag{t}\leq
      M-\frac{K}{2} -\Betrag{k}$, we end up with}
    \tpoMk (t+k)&=\frac{1}{N} \sum_l  \frac{\sum_{n=-M}^{M} p(t-n) e^{2\pi i \frac{l}{N}n}}{\sqrt{\tlam^M_l}}=\tpoM(t).
    \end{align}
    For all $\Betrag{k}<M$ the ALO pulses have the same shape in the window $\Betrag{t}\leq M-\frac{K}{2}-\Betrag{k}$ if
    we shift them back to the origin. 
\end{bemerkung}

    Moreover, the ALO pulses are all continuous on the real line, since  they are a finite sum of continuous functions
    by definition \eqref{eq:alo}. Hence each ALO pulse goes continuously to zero at the support boundaries.  So far it
    is not clear whenever $\tpo$ is continuous or not. Nevertheless its spectrum $\htpo$ is continuous and so we can
    state $\tpo=\po$ almost everywhere. Hence the orthogonalization trick defines the \changes square-root \changee
    Nyquist pulse $\po$ only in an $L^2$ sense. 
%
\changes\section{Discussion of the Analysis}\label{sec:discussion}\changee

 In this section we will discuss now the properties of our OOPPM design for UWB, i.e.  the optimization and
orthogonalization, which can be completely described by an IIR filtering process.  First we will relate the Löwdin
orthogonalization to the canonical tight frame construction.  Afterwards we will show in \secref{sec:optimality} that
the Löwdin transform yields the orthogonal generator with the minimal $L^2$-difference to the initial optimized pulse.
This is the same optimality property as for canonical tight frames \cite{JS02}.  But such an energy optimality does not
guarantee FCC compliance. So we will discuss in \secref{sec:interdependence} the influence of a perfect orthogonalization
to the FCC optimization.  Finally, we will discuss the implementation of a perfect orthogonalization by FIR
filtering.


\subsection{Relation Between Tight Frames and ONBs}\label{sec:TFandONB}
%
 Any Riesz basis $\{p_k\}$ for a Hilbert space $\Hil$ is also a \emph{exact frame} for $\Hil$ with the frame
operator $S$ defined by
\begin{align}
    S: \Hil \to \Hil,\quad 
    f \mapsto Sf= \sum_k \SPH{f}{p_k} p_k,\label{eq:frame}
\end{align}
where the frame bounds are given by the Riesz bounds $0\!<\!A\leq B\!<\!\infty$ of $\{p_k\}$ \cite[Th.  5.4.1,
6.1.1]{Chr03}, i.e. 
  \begin{align}
    A\Norm{f}_{_{\Hil}}^2\leq \SPH{Sf}{f}\leq B\Norm{f}_{_{\Hil}}^2\quad\text{for any}\quad f\in\Hil.
  \end{align}
  Here $\SPH{\cdot}{\cdot}$ denotes the inner product in $\Hil$ and $\Norm{\cdot}_{\Hil}$ the induced norm. Since $S$ is
  bounded and invertible, i.e. the inverse operator exists and is bounded \cite{Chr03}, we can write each $f\in\Hil$ as
\begin{align}
  f=SS^{-1}f= \sum_k \SPH{S^{-1} f}{p_k} p_k \label{eq:frameinv}.
\end{align}
In this case the Löwdin orthonormalization corresponds to the \emph{canonical tight frame construction}.

%
%

%
\begin{lemma}\label{le:meyer}
  Let the sequence $\{p_k\}$ be a Riesz basis for the Hilbert space $\Hil:=\cc{\spann\{p_k\}}$ and $\Gram$ its Gram
  matrix. Then the canonical tight
  frame $\{\po_k\}$ is given for each $k\in\Z$ by:   
  \begin{align}
      \po_k := S^{-\frac{1}{2}} p_k = \sum_l [\Graminv]_{kl}  \ p_l \label{eq:framegram}
  \end{align}
  in an $L^2$-sense.\footnote{ This statement was already given without further explanation by Y. Meyer in \cite{Mey86}
  equation (3.3). Note that Y. Meyer used condition (3.1) and (3.2) in \cite{Mey86} which are equivalent to the Riesz basis
  condition.}
\end{lemma}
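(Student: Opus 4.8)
The plan is to reduce the statement to the standard intertwining relation between the frame operator and the Gram matrix, realised through the synthesis operator. First I would introduce the synthesis operator $U\colon \ell^2 \to \Hil$, $U\vc := \sum_k c_k p_k$, which is bounded and boundedly invertible by the Riesz condition \eqref{eq:rieszconditionorig} of \defref{def:rieszbasis}, together with its adjoint, the analysis operator $U^*\colon \Hil \to \ell^2$, $(U^* f)_k = \SP{f}{p_k}$. A direct computation from the definitions then identifies the frame operator \eqref{eq:frame} as $S = U U^*$ and the Gram matrix \eqref{eq:Graminf} as $\Gram = U^* U$ on $\ell^2$. The Riesz bounds force $A\eins \le \Gram \le B\eins$ and $A\eins \le S \le B\eins$ with $A>0$, so that both operators are self-adjoint, positive and invertible with spectrum contained in $[A,B]$; hence the positive inverse square roots $\Graminv$ (on $\ell^2$) and $S^{-\frac{1}{2}}$ (on $\Hil$) exist through the continuous functional calculus.

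The key step is the intertwining identity. Since $U U^* U = S U$ and simultaneously $U U^* U = U \Gram$, one obtains $U\Gram = S U$, and by induction $U\Gram^n = S^n U$, hence $U\, q(\Gram) = q(S)\, U$ for every polynomial $q$. Because polynomials are uniformly dense in $C([A,B])$ and the functional calculus is norm-continuous, and since $U$ is bounded, the operator-norm limits give $U f(\Gram) = f(S)\, U$ for every $f$ continuous on $[A,B]$; applying this to $f(t)=t^{-\frac{1}{2}}$ (continuous as $A>0$) yields $U\Graminv = S^{-\frac{1}{2}} U$. Writing $p_k = U e_k$ for the standard basis $\{e_k\}$ of $\ell^2$, I would then read off $S^{-\frac{1}{2}} p_k = S^{-\frac{1}{2}} U e_k = U\bigl(\Graminv e_k\bigr) = \sum_l [\Graminv]_{lk}\, p_l$, and since $\Gram$ is the real symmetric Toeplitz Gram matrix of $p$ its inverse square root is self-adjoint and symmetric, so $[\Graminv]_{lk} = [\Graminv]_{kl}$, which is exactly \eqref{eq:framegram}.

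It remains to record why $\{\po_k\} = \{S^{-\frac{1}{2}}p_k\}$ is the canonical (Parseval) tight frame: its frame operator sends $f$ to $\sum_k \SPH{f}{S^{-\frac{1}{2}}p_k}\,S^{-\frac{1}{2}}p_k = S^{-\frac{1}{2}}\bigl(\sum_k \SPH{S^{-\frac{1}{2}}f}{p_k}\,p_k\bigr) = S^{-\frac{1}{2}} S\, S^{-\frac{1}{2}} f = f$, using self-adjointness of $S^{-\frac{1}{2}}$, so both frame bounds equal $1$. I expect the main obstacle to be operator-theoretic bookkeeping rather than a genuine difficulty: one must justify the passage from the polynomial intertwining to $f(t)=t^{-\frac{1}{2}}$ via the functional calculus in the infinite-dimensional setting, so that the intertwining $U\Graminv = S^{-\frac{1}{2}}U$ and the implicit infinite sums converge in $\Hil$, and one must track the index ordering and conjugation in $[\Graminv]_{kl}$, which collapses to the stated symmetric form precisely because $r_p$, and hence $\Gram$, is real and symmetric. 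All identities are to be read in the $L^2$-sense asserted in the lemma.
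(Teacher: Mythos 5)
Your proof is correct, but it takes a genuinely different route from the paper's. The paper works inside the reconstruction formula $p_l=\sum_k\SPH{S^{-1}p_l}{p_k}\,p_k$: it splits $S^{-1}=S^{a}S^{b}$ with $a+b=-1$, expands $S^{a}p_l$ and $S^{b}p_k$ in the Riesz basis with coefficient matrices $\mC$ and $\mD$, invokes uniqueness of Riesz-basis expansions to get $\mC\Gram\mD^{*}=\eins$, hence $\Gram^{-1}=\mD^{*}\mC$, and finally specializes to $a=b=-\frac{1}{2}$ so that $\Graminv=\mC=\mD$. You instead factor everything through the synthesis operator $U$, identify $S=UU^{*}$ and $\Gram=U^{*}U$, and transport the continuous functional calculus across $U$ via the intertwining $U\Gram=SU$, extended from polynomials to $t^{-1/2}$ by density. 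Your route is the standard operator-theoretic one, and it makes explicit a point the paper leaves terse: to pass from $\mC^{*}\mC=\Gram^{-1}$ to $\mC=\Graminv$ one must know that the coefficient matrix of $S^{-\frac{1}{2}}$ is itself positive and self-adjoint on $\ell^2$, which is precisely what your identity $U\Graminv=S^{-\frac{1}{2}}U$ supplies; in exchange, the paper's argument stays entirely at the level of basis expansions and avoids the polynomial-approximation step. One bookkeeping remark: your computation naturally yields $\sum_l[\Graminv]_{lk}\,p_l$ while the lemma states $[\Graminv]_{kl}$; for a general complex Riesz sequence these differ by a complex conjugation of the Hermitian matrix $\Graminv$, and they coincide in the paper's application because the Gram matrix of the real pulse translates is real symmetric --- you flag this correctly, and it is a blemish of the abstract formulation rather than of your argument.
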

\begin{paragraph}{Proof}
  See \ref{app:meyer}.
\end{paragraph}
%
%

%
If we now set $p_n := p(\cdot -n)\in L^2$ the Riesz-basis is generated by shifts of a stable generator and $\Hil=\SI(p)$
becomes a 
principal shift-invariant (PSI) space,

which is a separable Hilbert subspace of $L^2$ as discussed in Section \ref{sec:shiftspaces}.  The canonical tight frame
construction then generates a shift-orthonormal basis, i.e. an orthonormal generator.  The reason is that
shift-invariant frames and Riesz bases are the same in regular shift-invariant spaces \cite[Th.2.4]{CCK01}. So any frame
becomes a Riesz basis (exact frame) and any tight frame an ONB (exact tight frame). Hence for regular PSI spaces there
exists no redundancy for frames. This generalize the Löwdin transform for generating a \changes square-root \changee
Nyquist pulse to any stable generator $p$.

From Meyer \cite{Mey86} we know that \eqref{eq:framegram} can be written in frequency domain as the
orthogonalization trick. Therefore the limit of the Löwdin transformation
$\tilde{B}=B\colon \SI(p) \to \SI(p)$
\begin{align}
  f&\mapsto Bf=\int_0^1 \frac{(\Zak f)(\cdot,\nu)}{\sqrt{\Zak r_p)(0,\nu)}}d\nu
\end{align}
equals the inverse square-root of the frame operator in
\eqref{eq:framegram}.

\subsection{Optimality of the Löwdin Orthogonalization}\label{sec:optimality}
%
Janssen and Strohmer have shown in \cite{JS02} that the canonical tight-frame construction of Gabor frames for $L^2$ is
via Ron-Shen duality equivalent to an ONB construction on the adjoint time-frequency lattice. Furthermore they
have shown that among all tight Gabor frames, the canonical construction yields this  particular generator with minimal
$L^2$-distance to the original one. However, for SI spaces this optimality of the Löwdin orthogonalization has to be
proved otherwise. To prove this we use the structure of regular PSI spaces.
%
\begin{thm}\label{th:unique}
  The unique orthonormal generator with the minimal $L^2$
  distance to the normalized stable generator $p\in L^2$ for $\SI(p)$ is given by the Löwdin generator $\po$.
\end{thm}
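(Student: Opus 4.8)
The plan is to carry out the entire argument in the frequency domain, where Theorem~\ref{th:riesz} and Theorem~\ref{th:lozak} give a clean description of every orthonormal generator of $\SI(p)$. First I would record the key structural fact: since $\po$ is itself an orthonormal generator (Theorem~\ref{th:lozak}), the translates $\{\po(\cdot-n)\}$ form an orthonormal basis of $\SI(p)$, so every $g\in\SI(p)$ can be written as $\hat g=m\,\hpo$ for a unique $1$-periodic multiplier $m\in L^2([0,1])$, namely the Fourier series of its expansion coefficients. Using the $1$-periodicity of $m$ together with $\hpo=\hp\,\Phi_p^{-1/2}$ one gets $\Phi_g=|m|^2\,\Phi_{\po}=|m|^2$, so by the orthonormality characterization \eqref{eq:orthocondition} in Theorem~\ref{th:riesz} the element $g$ is an orthonormal generator of $\SI(p)$ if and only if $|m(\nu)|=1$ for a.e.\ $\nu$. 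Unimodularity makes both $m$ and $1/m=\overline m$ bounded, so such a $g$ genuinely generates the whole space. This reduces the theorem to minimizing the $L^2$-distance over the set of unimodular $1$-periodic multipliers.

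Next I would compute the objective by Plancherel and periodization onto the unit cell. With $\hat g=m\,\hpo$ the only nontrivial contribution is the cross term, which folds using $\Phi_p(\nu)=\sum_k|\hp(\nu+k)|^2$ from \eqref{eq:symbol} and the $1$-periodicity of $m$ and $\Phi_p$:
\begin{align}
  \int_\R \hp(\nu)\,\overline{\hat g(\nu)}\,d\nu
  =\int_0^1 \overline{m(\nu)}\,\frac{1}{\sqrt{\Phi_p(\nu)}}\sum_k|\hp(\nu+k)|^2\,d\nu
  =\int_0^1 \overline{m(\nu)}\,\sqrt{\Phi_p(\nu)}\,d\nu .
\end{align}
Since every orthonormal generator satisfies $\Norm{g}_{L^2}=1$, and $\operatorname{Re}(\overline m)=\operatorname{Re}(m)$, this yields
\begin{align}
  \Norm{p-g}_{L^2}^2=\Norm{p}_{L^2}^2+1-2\int_0^1 \operatorname{Re}\big(m(\nu)\big)\sqrt{\Phi_p(\nu)}\,d\nu .
\end{align}
Minimizing the distance is therefore equivalent to maximizing $\int_0^1 \operatorname{Re}(m)\sqrt{\Phi_p}\,d\nu$ over unimodular $m$. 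As $\sqrt{\Phi_p}\ge 0$ and $\operatorname{Re}(m)\le|m|=1$ pointwise, the integral is maximized by the pointwise choice $m\equiv 1$, i.e.\ $\hat g=\hpo$, which identifies the minimizer as the Löwdin generator $\po$.

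Finally I would settle uniqueness. The Riesz lower bound \eqref{eq:rieszcondition} gives $\Phi_p(\nu)\ge A>0$ a.e., so the weight $\sqrt{\Phi_p}$ is strictly positive a.e.; hence any unimodular $m$ with $\operatorname{Re}(m)<1$ on a set of positive measure strictly decreases the integral, and $m\equiv 1$ is the unique maximizer. This pins down the minimizing orthonormal generator as $\po$ in the $L^2$-sense, with minimal value $\Norm{p-\po}_{L^2}^2=\int_0^1\big(\sqrt{\Phi_p(\nu)}-1\big)^2\,d\nu$. The only step requiring genuine care — and the main obstacle — is the structural characterization of the first paragraph: one must verify that \emph{every} orthonormal generator, not merely those obtained by convolving $\po$ with an $\ell^2$ sequence, arises as a unimodular $1$-periodic multiple of $\hpo$ in frequency, which is exactly where the orthonormal-basis property of $\{\po(\cdot-n)\}$ and the characterization $\Phi_g\equiv1$ are essential.
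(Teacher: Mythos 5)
Your proof is correct and follows essentially the same route as the paper's: the paper parametrizes orthonormal generators as $\hat f=\tau\,\hp$ with $|\tau|=\Phi_p^{-1/2}$ and an arbitrary unimodular periodic phase $e^{i\alpha}$, then minimizes $2-2\int\cos(\alpha)\,|\hp|^2\Phi_p^{-1/2}$ over $\alpha$, which after periodization onto $[0,1]$ is exactly your maximization of $\int_0^1\operatorname{Re}(m)\sqrt{\Phi_p}$ over unimodular $m$ (with $m=e^{i\alpha}$). The structural step you flag as the main obstacle is handled in the paper by the Riesz-basis/regular-PSI-space representation $\hat f=\hvc\,\hp$ with $\vc\in\ell^2$, which is equivalent to your multiplier representation relative to the ONB $\{\po(\cdot-n)\}$.
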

%
\begin{paragraph}{Proof}
  See \ref{app:proofunique}.
\end{paragraph}

Nevertheless, We have to rescale the orthonormal generator $\po$ to respect the FCC mask, see \secref{sec:app}. For this the maximal
difference of the power spectrum\footnote{In fact the $L^\infty$-distance of the FCC mask $\SFCC$ and $\Betrag{\hpo}^2$
in $F$ is relevant, assumed $\Betrag{\hpo}^2$ is bounded by $\SFCC$.} of the (normalized) optimal designed pulse and the
orthonormalized pulse is of interest, i.e.
\begin{align}
  \Norm{\Betrag{\hp}^2-\Betrag{\hpo}^2}_{L^\infty}= \esssup{\nu\in\R}{ \Betrag{\frac{\Phi_p (\nu) -1}{\Phi_p
  (\nu)}}\cdot\Betrag{\hp(\nu)}^2}  \leq \Norm{\frac{\Phi_p
  -1}{\Phi_p}}_{L^\infty} \cdot \Norm{\hp}_{L^\infty}^2 .
\end{align}
This shows again that this $L^\infty$ distortion is also determined by the spectral properties of the optimal designed
pulse $p$ and its Riesz bounds. Unfortunately it is very hard to control the optimization and orthogonalization filter
simultaneously as will be shown in the next section.

\subsection{Interdependence of Orthogonalization and Optimization}\label{sec:interdependence}
%
The causal FIR \changes operation \changee in \eqref{eq:firfilter} of a fixed initial pulse $q$ of odd order $L$  with clock
rate $1/T_0$ can be also written in the time-symmetric form as a real semi-discrete convolution  
\begin{align}
  p = q \convdisTo \vgL \quad\text{for}\quad \vgL \in \changes \ell^2_{\frac{L-1}{2}}(\R).\changee
\end{align}
In this section we investigate the interdependence of the IIR filter $\vh$ and the FIR filter
$\vgL$, i.e. the interdependence of the orthogonalization filter in \secref{sec:lofinite} and the FCC optimization
filter in \secref{sec:pulseshapedesign}  for different clock rates. So far we have first optimized spectrally and
afterwards performed the orthonormalization. In this order for a chosen $q$, the orthogonalization filter $\vh$ depends on
$\vgL$, hence we write $\vh=\vh_g$. Moreover the clock rates of the filters differ, hence we stick the time-shifts as
index in the semi-discrete convolutions. For the $T$-shift-orthogonal pulse we get then
\begin{align}
  \pTo = (q  \convdisTo \vgL ) \convdisT \vh_g \label{eq:hgconv}.
\end{align}
Let us set $T=\Delta T_0$ and $T_q= N_q T_0$ for $N_q \in\N, \Delta>0$. 
Since the filter clock rate of $\hvg$ is fixed to $1/T_0$ to ensure full FCC--range control, the variation is expressed
in $\Delta$. To get rid of $T_0$ we scale the time $t$ to $t'=t/{T_0}$ such that the time--shift of $\vgL$ is $T'_0=1$.
\changes
\changes We \changee observe the following effects:
  1) If $\frac{1}{\Delta}\in\N$ then $\poD=\qoD$.
  \label{enu:h-eats-ga} 
  2) If $\Delta\in\N$ then the distortion by 
    $\hvh_g$ is limited periodically to the interval $[-\frac{1}{2\Delta},\frac{1}{2\Delta}]$. 
    \label{enu:h-free-g}
3) If $q$ is already $1$-shift-orthogonal and $\Delta\in\N$ then $\vh_g$ can be omitted and instead adding an
    extra condition on $\vgL$ to be a $\Delta$-shift orthogonal-filter, i.e. $r_{\vgL}(k\Delta)=
    \del_{k0}$, which ensures $\Delta$-shift orthogonality of the output $p$. 
    \label{enu:filterorthogonal}
    \changee
To \changes see \changee point 1), let us first orthogonalize $q$ by $\vh_q$ and ask for the filter $\vgtDg$ which preserves the
$\Delta$-orthogonalization in the presence of $\vgL$.  Hence we aim at
\begin{align}
  \poD = p \convdisD \vh_g =\qoD \convdis \vgtDg.\label{eq:pod}
\end{align}
But from \eqref{eq:lofreq} we know how $\vh_g$ acts
in the frequency-domain:
\begin{align}
  \Betrag{\hpoD(\nu)}^2 &= \frac{\Betrag{\hp(\nu)}^2}{\frac{1}{\Delta}\sum_k \Betrag{\hp(\nu\!-\!\frac{k}{\Delta})}^2}
  \overset{\overset{\eqref{eq:pod}}{\downarrow}}{=}  \Betrag{\hqoD(\nu) \cdot \hvgtDg (\nu)}^2.\label{eq:filteradj}
\end{align}
\changes Since $\Betrag{\hp( \nu)}^2\!=\!\Betrag{\hvgL \!(\nu)\cdot \hq(\nu)}^2$ and $\frac{1}{\Delta}\!\in\!\N$ we get by
the $\frac{1}{\Delta}$-periodicity of $\Betrag{\hvgL}^2$
\begin{align}
   \Betrag{\hqoD(\nu) \cdot \hvgtDg (\nu)}^2
  =\frac{|\hvgL(\nu)|^2 \cdot\Betrag{\hq(\nu)}^2}{\frac{1}{\Delta}|\hvgL(\nu)|^2 
    \cdot\sum_k \Betrag{\hq(\nu-\frac{k}{\Delta})}^2}
   =\frac{ \Betrag{\hq(\nu)}^2}{\frac{1}{\Delta}\sum_k \Betrag{\hq(\nu-\frac{k}{\Delta})}^2}=
   \Betrag{\hqoD(\nu)}^2\label{eq:inter}.
\end{align}
\changee
\changes Hence we get $\tilde{g}_{\Delta,\vgL}(k)=\delta_{k0}$, which shows 1). The price of the \changee orthogonalization is the loss of a
frequency control, since the frequency property is now completely given by the basic pulse $q$ and \changes time-shift
\changee $\Delta$. In \figref{fig:t2t1} the effect is plotted for $\Delta \in[1,2]$ and $L=25$. For small $\Delta$ the
distortion is increase by the orthogonalization. This also shows that a perfect orthogonalization and optimization with
the same clock rates is not  possible.

In 2) a perfect orthogonalization does not completely undo the optimization, \changes since $T=\Delta>T_0=1$. \changee
For $\Delta=2$ we can describe the filter by using the addition theorem in
$\Betrag{\hvgL(\nu+1/2)}^2=\hvr(\nu+1/2)=2r_{\vg,0}-\hvr(\nu)$ by \changes
\begin{align}
  \sum_k \Betrag{\hp\left(\nu-\frac{k}{2}\right)}^2 = 
      \hvr(\nu) \Bigg[ \sum_k \Betrag{\hq\left(\nu + \frac{2k}{2}\right)}^2 +
      \frac{\hvr(\nu\!+\!\frac{1}{2})}{\hvr(\nu)} \underbrace{\sum_k \Betrag{\hq \left(\nu + \frac{2k\!+\!1}{2}\right)}^2}_{=:\Phi'_q (\nu)} \Bigg]
      =  \hvr(\nu) \Big[ \Phi_q (\nu) +\left(\frac{2r_{\vg,0}}{\hvr(\nu)} - 1\right) \Phi'_q(\nu) \Big]
\end{align}\changee
which results in the filter power spectrum \eqref{eq:filteradj}
\begin{align}
  \Betrag{\hvgtDzweig(\nu)}^2 =\left(1+\left(\frac{2r_{\vg,0}}{\hvr(\nu)} - 1\right) \cdot \frac{\Phi'_q
  (\nu)}{\Phi_q(\nu)}\right)^{-1}\label{eq:filteradjcon}.
\end{align}
But since we fixed $\Delta=2$ and $q$ we can calculate $\Phi_q,\Phi'_q$ and $\hqoD$. This provides a separation of the
filter power spectrum $\hvr(\nu)=\Betrag{\hvgL(\nu)}^2$ and the orthogonalization. Unfortunately, this does not yield
linear constraints for $\vr$.\\
Finally, note that the  time-shifts and hence the filter clock rates have to be chosen such that an overlap of the
basic pulse occurs.  Otherwise a frequency  shaping is  not possible.

\changes Case 3) assumes already a shift-orthogonality. We only have to ensure that the
spectral optimization filter $\vgL$ preserves the orthogonality. This results in an extra orthogonality constrain for the
filter $\vgL$, which can be easily incorporated in the SDP problem of \secref{sec:pulseshapedesign}, see Davidson
et. al in \cite{DLW00}. \changee

Summarizing, the discussion above shows that joint optimization and orthogonalization is a complicated problem and only
in specific situations a closed-form solution seems to be possible. 

\subsection{Compactly Supported Orthogonal Generators}\label{sec:cso}
%
For PPM transmission  a time-limited shift-orthogonal pulse is necessary to guarantee an ISI free modulation in
a finite time. Such a PPM pulse is a compactly supported orthogonal (CSO) generator. In PPM this is simply realized  by
avoiding the overlap of translates. 

To apply our OOPPM design it is hence necessary to guarantee a compact support of the Löwdin generator $\po$ given in
\thmref{th:lozak}. In this section we will therefore investigate the support properties of orthogonal generators. The
existence of a CSO generator (with overlap) was already shown by Daubechies in \cite{Dau90}.
Unfortunately, she could not derive a closed-form for such an CSO generator. 
Moreover, to obtain a realizable construction of a CSO generator this construction has to be performed in a finite time.
So our Löwdin construction should be obtained by a FIR filter.

PSI spaces of compactly supported (CS) generators, were characterized in detail by de Boor et al.  in \cite{BDR94b} and
called \emph{local PSI spaces}. If the generator is also stable, as in \thmref{th:lozak}, then there exists a sequence
$\vc\in\ell^2$ such that $p\convdis \vc$ is an CSO generator. Moreover, any CSO generator is of this form.  To
investigate compactness, de Boor et.al.  introduced the concept of \emph{linear independent} shifts for CS
generators.  The linear independence property of a CS generator $p$ is equivalent by \cite[Res. 2.24]{BDR94b}  to
\begin{align}
\{\Laplacep(z-n)\}_{n\in\Z} \not= 0 \quad \text{ for all}\quad z\in\C \label{eq:linind}
\end{align} 
where $\Laplacep$ denotes the Laplace transform of $p$. This means $\Laplacep$ do not have periodic zero points.  Note
that this definition of independence is stronger than finitely independence, see definition in \cite{BDR94b}. If we
additionally demand  linear independence of $p$ in our  Theorem \ref{th:lozak}, then this CS generator
is unique up to shifts and scalar multiplies.  Furthermore, a negative result is shown in \cite{BDR94b}, which excludes
the existence of a CSO generator if $p$ itself is not already orthogonal. But if $p$ is already orthogonal, then $p$ is
unique up to shifts and scalar multiplies and then the Löwdin construction becomes a scaled identity (normalizing of
$p$). The statement is the following: 

%
\begin{thm}[Th.~2.29 in \cite{BDR94b}]\label{th:negative}
  Let $p\in L^2$ be a linear independent generator for $\PSI(p)$ which is not orthogonal, then there does not exists a
  compactly supported orthogonal generator $\po$ for $\PSI(p)$, i.e. there exists no filter $\vh\in \ell^2$ such that
  $\po =p \convdis \vh$. \end{thm}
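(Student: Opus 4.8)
The plan is to argue by contradiction: I would assume that a compactly supported orthogonal generator $\po$ for $\PSI(p)$ does exist, so that $\po=p\convdis\vh$ for some $\vh\in\ell^2$, and then derive that $p$ must already be orthogonal. The first step is to push everything into the frequency domain. Writing the symbol of $\vh$ as in \eqref{eq:dft} (here with unit shift), we have $\hpo(\nu)=\hvh(\nu)\hp(\nu)$, so the shift-orthonormality condition \eqref{eq:orthocondition} applied to $\po$ becomes $\Phi_{\po}(\nu)=\Betrag{\hvh(\nu)}^2\,\Phi_p(\nu)=1$ for almost every $\nu$. Since $p$ is compactly supported, $\Phi_p$ is, by \eqref{eq:symbolpoisson}, a Hermitian-symmetric trigonometric polynomial, i.e.\ a Laurent polynomial in $w=e^{-2\pi i\nu}$, and it is strictly positive on $\Abs{w}=1$ because $p$ is a stable generator. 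The entire proof then hinges on showing that $\hvh$ is likewise a Laurent polynomial, for then $\Betrag{\hvh}^2\,\Phi_p\equiv 1$ is an identity between Laurent polynomials.

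The main obstacle, and the only place where the linear-independence hypothesis \eqref{eq:linind} enters, is precisely the claim that $\vh$ is finitely supported. I would establish this through the entire extension $\Laplacep$ of $\hp$ to complex frequency $z$, which by Paley--Wiener is of exponential type because $p$ is compactly supported; the same holds for $\po$. From $\po=p\convdis\vh$ we get $(\mathscr{L}\po)(z)=\hvh(z)\,\Laplacep(z)$, where $\hvh(z)=\sum_n h_n e^{-2\pi i z n}$ is $1$-periodic. Thus $\hvh=(\mathscr{L}\po)/\Laplacep$ is meromorphic and its poles can lie only at zeros of $\Laplacep$; if it had a pole at some $z_0$, periodicity would force poles at every $z_0+n$, and since $(\mathscr{L}\po)=\hvh\,\Laplacep$ is entire, $\Laplacep$ would have to vanish on the whole integer coset $\{z_0+n\}_n$ --- contradicting exactly the no-periodic-zeros condition \eqref{eq:linind}. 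Hence $\hvh$ is entire and periodic, and the exponential-type bounds on numerator and denominator control its growth on vertical strips by $Me^{2\pi C\Betrag{\Im z}}$; estimating the Fourier coefficients $h_n$ by contour shifts $\Im z\to\pm\infty$ then forces $h_n=0$ for $\Betrag{n}$ large. This is the local PSI-space mechanism of de~Boor--DeVore--Ron \cite{BDR94b}, and I expect the growth/contour estimate to be the delicate point.

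With $\vh$ finite, both $\Betrag{\hvh(\nu)}^2=\hvh(\nu)\overline{\hvh(\nu)}$ and $\Phi_p$ are Laurent polynomials in $w$ whose product equals the constant $1$; therefore $\Phi_p$ is a unit in $\C[w,w^{-1}]$, hence a monomial $cw^k$. But a monomial that is Hermitian-symmetric and strictly positive on $\Abs{w}=1$ must have $k=0$ and $c>0$, so $\Phi_p\equiv c$ is constant. By the characterization \eqref{eq:orthocondition} in \thmref{th:riesz}, a constant $\Phi_p$ means that, after absorbing the normalization in $c$, the translates $\{p(\cdot-n)\}$ are already orthonormal, i.e.\ $p$ is an orthogonal generator --- contradicting the hypothesis. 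This contradiction proves that no compactly supported orthogonal generator exists. I would close by remarking that this also explains the dichotomy preceding the theorem: a compactly supported Löwdin generator is available exactly when $p$ is already orthogonal, in which case the orthogonalization \eqref{eq:lofreq} reduces to a mere normalization.
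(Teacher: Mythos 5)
First, a remark on the comparison itself: the paper contains \emph{no proof} of this statement — Theorem~\ref{th:negative} is imported verbatim as Th.~2.29 of \cite{BDR94b} — so there is no internal argument to measure yours against; I can only assess your reconstruction on its own terms and against the source it would have to replicate.

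Your overall strategy (reduce to $\Betrag{\hvh}^2\Phi_p=1$, show both factors are Laurent polynomials in $w=e^{-2\pi i\nu}$, conclude that $\Phi_p$ is a unit of $\C[w,w^{-1}]$, hence a positive constant, hence $p$ already orthogonal) is the right one, and the endgame is correct. The genuine gap is exactly where you suspect it: the claim that $\vh$ is finitely supported. As written, the justification fails — to bound the quotient $F=(\mathscr{L}\po)/\Laplacep$ on horizontal lines you need a \emph{lower} bound on $\Betrag{\Laplacep}$, and exponential-type bounds on numerator and denominator give you only upper bounds; $\Laplacep$ generically has zeros marching off to infinity near which $F$ is large, so ``the exponential-type bounds \dots control its growth'' is not an argument. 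Repairing your route requires a Lindel\"of/minimum-modulus division theorem for entire functions of exponential type (if $f/g$ is entire and $f,g$ are of exponential type, then so is $f/g$), which is true but is precisely the nontrivial content you would be importing. There is a cheaper way around this step that avoids complex growth estimates entirely: expand in the \emph{other} direction. Since $\{\po(\cdot-n)\}$ is an orthonormal basis for $\PSI(\po)=\PSI(p)\ni p$, write $p=\sum_n a_n\,\po(\cdot-n)$ with $a_n=\SP{p}{\po(\cdot-n)}$; these coefficients vanish for $\Betrag{n}$ large simply because both $p$ and $\po$ are compactly supported, so $\hat{\va}$ is automatically a Laurent polynomial. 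Then $\Laplacep(z)=\hat{\va}(e^{-2\pi iz})\,(\mathscr{L}\po)(z)$, and any zero $w_0\neq0$ of the Laurent polynomial $\hat{\va}$ produces a full integer coset $\{z_0+n\}_{n\in\Z}$ of zeros of $\Laplacep$, which \eqref{eq:linind} forbids; hence $\hat{\va}=cw^k$ is a monomial and $\Phi_p=\Betrag{\hat{\va}}^2\Phi_{\po}=\Betrag{c}^2$ is constant, contradicting the hypothesis. This is essentially the mechanism in \cite{BDR94b}. Your version is salvageable, but it currently hangs on the one estimate you did not supply.
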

If $p$ is a linear independent generator which is not orthogonal, then the Löwdin generator $\po$
has not compact support. We  extend this together with the existence and uniqueness of a linear independent generator for
a local PSI space $\PSI(p)$:
\begin{corrolary}
  Let $p\in L^2$ be compactly supported. If there exists a compactly supported orthogonal generator $\po$ for $\PSI(p)$,
  then it is unique.
\end{corrolary}
\begin{proof}
  Any CSO generator $\po\in L^2$ is a linear independent generator by
  \cite[Prop.~2.25(c)]{BDR94b}. Since the linear independent generator is unique by \cite[Th.~2.28(b)]{BDR94b}, the
  CSO generator is as well.
\end{proof}
\begin{bemerkung}
In any case there exists an orthogonal generator for a local PSI space. For a stable CS generator $p$ our Theorem
\ref{th:lozak} gives an explicit construction and approximation for an orthogonal generator by an IIR filtering of $p$.
If the Löwdin generator is not CS, it is the unique  orthogonal generator with the minimal $L^2$-distance to the
original stable CS generator by Theorem \ref{th:unique}.  So far it is not clear whether there exists a IIR filter
$\vc\in\ell^2$ which generates a CSO generator from a stable CS generator or not. What we can say is that if the inverse
square-root of the Gram matrix is banded, then the rows corresponds to FIR filters which produce CSO generators, since the
semi-discrete convolution reduces to a finite linear combination of CS generators. So this is a sufficient condition for
the Löwdin generator to be CSO, but not a necessary one.
\end{bemerkung}
%
\changes\section{Application in UWB Impulse Radio Systems}\label{sec:app}\changee

%

  Here we give some exemplary applications  of our filter designs developed in \secref{sec:pulseshapedesign} and
  \secref{sec:orthogonalization} for UWB-IR.
\begin{paragraph}{FIR filter realized by a distributed transversal filter}
  The FIR filter is completely realized in an analog fashion. It consists of time-delay lines and multiplication of the
  input with the filter constants. Note also that these filter values are real-valued. An application \changes in UWB
  radios \changee was already considered in \cite{ZZMW09}.
\end{paragraph}

\begin{paragraph}{Transmitter and receiver designs}
  Our channel model is an AWGN channel, i.e. the received signal $r(t)$ 
  is the transmitted UWB signal $u(t)$ given in \eqref{eq:APAMPPM} \changes by adding \changee white Gaussian noise.
  For simplicity of the discussion we omitted the time-hopping sequence \changes $\{c_n\}$ \changee in \eqref{eq:uwbsignal}.
\begin{figure}
  \centering
  \includegraphics[scale=0.42]{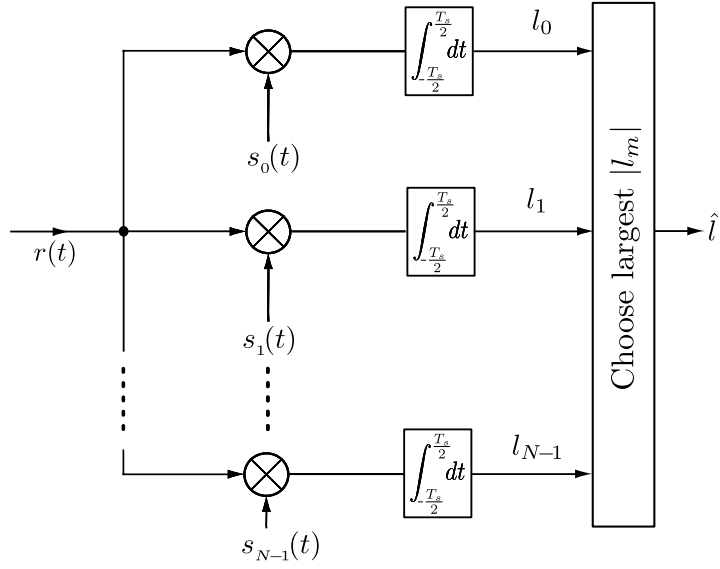}
  \caption{Optimal Receiver for $N$-ary orthogonal modulation with scaled Löwdin pulses $s_m := 
  \sqrt{{\mathcal E}_n}{p}^{T,\circ,M}_{m}$.}
  \label{fig:orlomulti}
\end{figure}
\noi We propose now three $N-ary$ waveform modulations for our pulse design.  \changes Since our proposed modulations are
linear and performed in the baseband, the signals (pulses) are \changee real-valued.\footnote{\changes Note that our
proposed pulse design can be also used for a complex modulations (carrier based modulation), e.g. for OFDM or
FSK.\changee}
\begin{enumerate}[(a)] 
  \item A pulse shape modulation (PSM) with the Löwdin pulses $\{\pToM_m\}_{m=-M}^M$, which corresponds to a $N$-ary
    orthogonal waveform modulation. \changes The $n$'th message $m$ is transmitted as the signal
    $u(t)= \sqrt{\En}\  a_n \pToM_{m} (t-nT_s)=a_n s_m (t-nT_s)$. \changee The receiver is realized by $N$
    correlators using the Löwdin pulses as templates. \changes From the
    correlators output $l_m$  the absolute value is taken due \changee
    to the random amplitude flip by \changes $a_n$, see \figref{fig:orlomulti}.\changee
    \label{en:lo} 
  \item The centered ALO and LO pulse $\tpToM:=\tpToM_0$ resp. $\pToM:=\pToM_0$ for an OPPM design are in fact a
    non-orthogonal modulation scheme with a matched filter at the receiver, \label{en:oppm} see \figref{fig:matched}.
    \changes The $n$'th message $m$ is then transmitted as ${u}(t) = \sqrt{\En}\  a_n \tpToM (t-nT_s - mT)$ in the
    ALO-OPPM
    design. The matched filter output $y(t)=\int_{-\infty}^{nT_s+T_s/2} r(\tau)\tpToM (\tau-t) d\tau$  is sampled for
    the $n$'th message at $y(nT_s+mT)$. (for LO-OPPM  use $\pToM$) \changee
  \item The limiting OOPPM design with the Löwdin pulse $\po$ \label{en:nyquist} is not practically feasible, since we
    have to use an IIR filter.  Hence we only refer to this setup as the theoretical limit. \changes The
    transmitted signal would be $u (t) =  \sqrt{\En}\  a_n \pTo (t-nT_s - mT)$ with the matched filter $h(t)=\pTo (t)$.
    Note that the receiver in \figref{fig:matched} also has to integrate over the whole time due to the unlimited
    support, which would produce an infinity delay in the decoding process.\changee
\end{enumerate}
\end{paragraph}

\begin{figure}
  \centering
  \includegraphics[scale=0.42,angle=0]{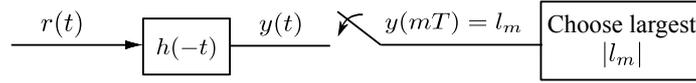}
  \caption{Matched filter receiver for an ALO and LO-OPPM scheme.}
  \label{fig:matched}
\end{figure}

  \begin{paragraph}{Scaling with respect to the FCC mask}\label{scaling}
    The operations $B^M$ and $\tilde{B}^M$ generate pulses which are normalized in energy but do not respect anymore 
    the FCC mask. So we have to find for the $m$th pulse its maximal scaling factor $\alpha_m >0$ s.t.
    \begin{align}
      \Betrag{\alpha_m \cdot\hpoMm(\nu)}^2 \leq \SFCC(\nu)
    \end{align}
    is still valid for any $\nu\in F$. This problem is solved by 
    \begin{align}
      \alpha^*_m= \Norm{\frac{\Betrag{\hpoMm}^2}{\SFCC}}_{L^\infty
      ([0,F])}^{-\frac{1}{2}}
      =\Norm{\frac{\sqrt{\SFCC}}{\hpoMm}}_{L^\infty([0,F])}.\label{eq:alphastern}
    \end{align}
    For the scaled Löwdin pulses we can easily obtain the following upper bound for the NESP value \eqref{eq:nesp} 
    \begin{align}
      \nesp(\alpha^*_m\poMm) = \frac{\int_{F_p} \Betrag{\alpha^*_m \hpoMm}^2}{\int_{F_p} \SFCC }  
      \leq  \frac{\Norm{\hpoMm}^2_{L^2}}{\En_{F_p} }  
             \cdot\Norm{\frac{\sqrt{\SFCC}}{\hpoMm}}^2_{L^\infty([0,F]}
      =  \frac{1}{\En_{F_p} }  
        \cdot\Norm{\frac{\sqrt{\SFCC}}{\hpoMm}}^2_{L^\infty([0,F]}
      =  \frac{1}{\En_{F_p} }  
      \cdot\Norm{\frac{\SFCC}{\Betrag{\hpoMm}^2}}_{L^\infty([0,F]} \label{eq:l2linfty}
    \end{align}
    where we denoted with $\En_{F_p}=\int_{F_p} \SFCC$ the allowed energy of the FCC mask in the passband $F_p$. 
    Thus, the maximization of the symbol energy \changes $\En$ \changee under the FCC
    mask is the maximization of $\alpha^*_m$ in \eqref{eq:alphastern}, i.e. a maximization of the $L^\infty$-norm in the
    frequency domain.
\end{paragraph}
\changes
\begin{bemerkung} We want to emphasize at this section, that the FCC spectral optimization is rather an optimization of 
  the power in an allowable mask given by the FCC, than an optimization of the spectral efficiency of the signal.
  Although the PSWF's have the best energy concentration in $[-W,W]$ among all time-limited finite energy signals
  \cite{PCWD03}, they are not achieving the best possible nesp value \cite{WTDG06}. The reason is, that  spectral efficiency with
  respect to UWB is to utilize as much power from the UWB window, framed by the FCC mask $\SFCC$, as possible. So the
  price of power utilization in $[-W,W]$ under the UWB peak power limit $\SFCC$, is a loss of energy concentration and
  hence a loss of spectral efficiency compared to the PSWF's. 
\end{bemerkung}
\changee

\subsection{Performance of the Proposed Designs}\label{sec:performance}
  For a given transmission design, consisting  of a modulation scheme and a receiver, the average bit error probability
  $P_e$ over $E_b/N_0$ is usually considered as the performance criterion. We consider real-valued signals in the
  baseband with finite symbol duration $T_s$. The optimal receiver for a non-orthogonal
  $N$-ary waveform transmission is the correlation receiver with $M$ correlators, see \figref{fig:orlomulti} with 
  maximum likelihood decision.
  \begin{paragraph}{$N$-ary orthogonal PSM for scheme \eqref{en:lo} above}  
    The average (symbol) error probability for $N$-ary orthogonal pulses with equal energy $\En$ can be upper bounded by
    \cite{Tre68}
    \begin{align}
      P_e \leq (N-1) \erfc\left(\sqrt{\frac{\En}{N_0}}\right) \label{eq:mpBERupper}
    \end{align}
    Note, that this error probability is the same as for an orthogonal PPM modulation \eqref{eq:performpar}.  To obtain
    equal energy symbols and FCC compliance we have to scale each Löwdin pulse with $\sqrt{\En}= \alpha^*  =
    \min_m\{\alpha^*_m\}$.
  \end{paragraph}
  \begin{paragraph}{$N$-ary overlapping PPM for scheme \eqref{en:oppm} above}
  Here we can substitute the $N$ correlations by one matched filter $h=\pToM$ resp. $\tilde{h}=\tpToM$ and obtain the
  statistics $|l_m|$ by sampling the output.  The average error probability per symbol $P_e$  given exactly in
  \cite[Prob. 4.2.11]{Tre68} for equal energy signals and can be computed numerically. The energy is given by
  $\sqrt{\En}=\alpha^*=\alpha^*_0$ and $\sqrt{\tilde{\En}}=\tilde{\alpha}^*_0$ calculated in \eqref{eq:alphastern} for
  $\pToM$ resp. $\tpToM$. Upper bounds obtained in
  \cite{Jac67} can be used for the ALO resp. LO average  error probability
  \begin{align}
    P_{e} \leq \frac{1}{2} \sum_{j=2}^N \erfc \left(\sqrt{\frac{\En}{2N_0} (1-\rho_{1j})}\right) \quad\text{and}\quad 
    \tilde{P}_{e}
    \leq \frac{1}{2} \sum_{j=2}^N \erfc \left(\sqrt{\frac{\tilde{\En}}{2N_0} (1-\tilde{\rho}_{1j})}\right)
    \label{eq:BERupper}
  \end{align}
  with $\rho_{1j}=\En r_{\pToM} (jT)$ and $\tilde{\rho}_{1j}=\tilde{\En} r_{\pToM}(jT)$, since the symbols are given by
  $s_j = \sqrt{\En}\pToM(\cdot - jT)$ resp. $\tilde{s}_j =\sqrt{\tilde{\En}}\tpToM(\cdot - jT)$ for $j=1,\dots,N$. The error
  probabilities depend on the pulse energy and on the decay of the sampled auto-correlation defined in
  \eqref{eq:autocor}.
\end{paragraph}

\subsection{Simulation Results}

The most common basic pulse for an UWB-IR transmission is the Gaussian monocycle: $q(t)\simeq
t\cdot\exp{(-t^2/\sigma^2)}$ where $\sig$ is chosen such that the maximum of $\Betrag{\hq(f)}^2$ is reached at the
center frequency $f_c=6.85$GHz of the passband \cite{WJT10}.  Since we need compact support and continuity for our
construction, we mask $q$ with a unit triangle window $\Lam$ instead of a simple truncation.  Also any other
continuous window function which goes continuously to zero (e.g. the Hann window) can be used, as long as the lower
Riesz bound $A>0$ can be ensured, see \thmref{th:lozak}. We have used an algorithm in \cite{WJ10} to compute $A$ and $B$
numerically.
Note that for any continuous compactly supported function we have a finite upper Riesz bound $B$, see
\cite[Th.2.1]{JM91}.  The width (window length)  is chosen to  $T_q=T_{\Lambda}=N_q T_0\approx0.21428$ns, such
that at least $99.99\%$ of the energy of $q$ is contained in the window $[-T_q/2,T_q/2]$, see \figref{fig:timeOptimal}.
We express all time instants as  integer multiples of $T_0$.  Also, in \figref{fig:timeOptimal} we plot the optimal
pulse obtained by a FIR filter of order $L=25$ which results in a time-duration $T_p=30T_0=5T_q$ of $p$.
\begin{figure}
  \centering
  \includegraphics[scale=0.45]{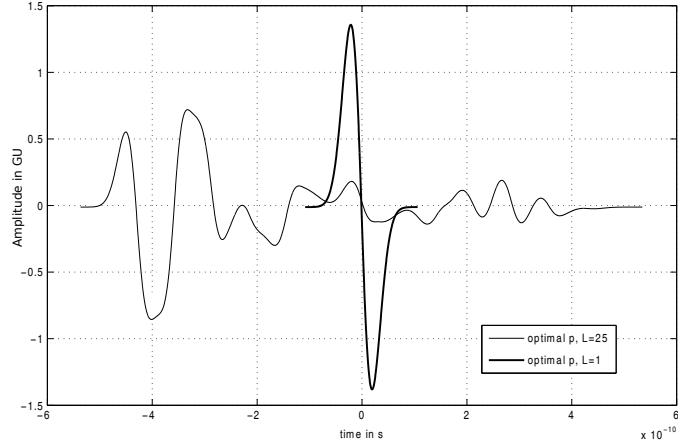}
  \caption{Optimal pulse $p$ for $L\!=\!25$ and basic pulse $q=p$ for $L\!=\!1$ in time-domain in generic units (GU).}
  \label{fig:timeOptimal}
\end{figure}
In our simulation we choose $T_q:=6 T_0 =N_q T_0$ and $L=25$ as the filter order of the FCC-optimization. Hence, the
optimized pulses have a total time duration of $T_p=(L-1)T_0+T_q=30T_0=N_p T_0$.

The Riesz condition \eqref{eq:rieszcondition} has been already verified in \cite{WJ10} for this particular setup.
Theorem \ref{th:lozak} uses the normalization $T'=1$.  Translating between different support lengths $T'_p=K$ is done
by setting $t:=t' T_p/K$. Now the support of $p(t')$ is $[-K/2,K/2]$ with fixed $T'=1$.  To obtain good
shift-orthogonality, we have to choose $M>K$. This we control with an integer multiple  $m=2$, i.e.  $M=mK=2K$.  The
support length $T_s$ of all the LO (ALO) pulses is then given as 
\begin{align}
  \begin{split}
  T_{\tpToM}&=(N-1)T + T_p = (2mK)T_p/K + N_p T_0 \\
  &=T_{\pToM}=(2m+1)N_p T_0=150T_0.
\end{split}
\end{align}
Now the time slot $[-T_{\pToM}/2,T_{\pToM}/2]$ exactly contains $N$ mutually orthogonal pulses $\{\pToM_m\}$, i.e. $N$
orthogonal symbols with symbol duration $T_s=T_{\pToM}$ having all the same energy and respecting strictly the FCC mask.
This is a $N$-ary orthogonal signal design, which requires high complexity at receiver and transmitter, since we need a
filter bank of $N$ different filters.\\
Our proposal goes one step further. If we only consider one filter, which generates at the output the centered Löwdin
\changes orthonormal \changee pulse $\pToM$, we can use this as a approximated \changes square-root \changee Nyquist
pulse with a PPM shift of $T=T_p/K$ to enable $N$-ary OPPM transmission by obtaining almost orthogonality.

Advantages of the proposed design are:
a low complexity at transmitter and receiver, 
a combining of $\vg$ and $\vh$ into a single filter operating with clock rate $1/T_0$ and $q$ as input
if $T=T_p/(T_0 K) \in\N$
, a signal processing ''On the fly'' and finally a much higher bit rate compared to a binary-PPM.
The only precondition for all this, is a perfect synchronisation between transmitter and receiver.
In fact, we  have to sample equidistantly at rate of $1/T$. The output of the 
matched filter $h(-t)=\pToM(t)$  is given by
\begin{align}
  y(t) = \int_{-\infty}^{\infty} r(\tau) \pToM (\tau-t) d\tau
\end{align}
and recovers the $m$th symbol. The statistic $l_m=y(mT)$ is the correlation of the received signal with the symbol
$s_m$, see \figref{fig:matched}.
Note that the shifts have support in the window $[-1.5T_{\pToM},1.5T_{\pToM}]$, but are
 almost orthogonal outside the symbol window
$[-T_{\pToM}/2,T_{\pToM}/2]$ due to the compactness and approximate shift-orthogonal
character of the Löwdin pulse $\pToM$.

In \figref{fig:timeFIRout} the centered orthogonal pulses for $T=5T_0=5T_q/6$ match almost everywhere the original masked
Gaussian monocycle, since the translates are almost non-overlapping, hence they are already almost orthogonal. For
$T=T_0=T_q/6$ the overlap results in a distortion of the centered orthogonal pulses, where the ALO pulses have 
high energy concentration at the boundary (circulant extension of the Gram Matrix).
\begin{figure}
  \centering
  \includegraphics[scale=0.43]{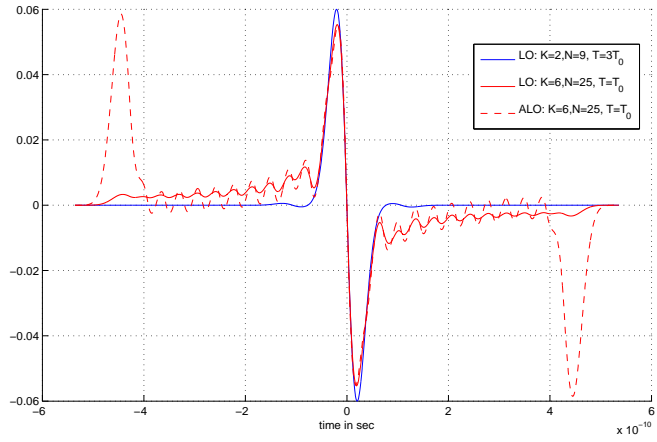}
  \caption{ALO and LO pulses for Gaussian monocycle in time, $L=1$ and $M=2K$.}
  \label{fig:timeFIRout}
\end{figure}
\begin{figure}
  \centering
  \includegraphics[scale=0.57]{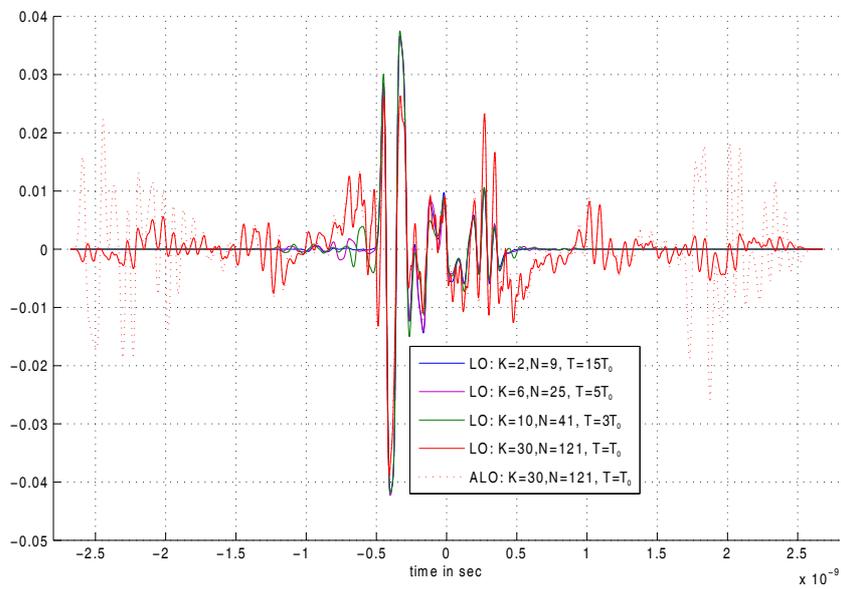}
  \caption{Orthogonal pulses $p^{T,\circ,M}$ with $L=25, M=2K$ and various $T=T_p/K$.}
  \label{fig:timeLO}
\end{figure}

\begin{figure}
  \centering
  \includegraphics[scale=0.48]{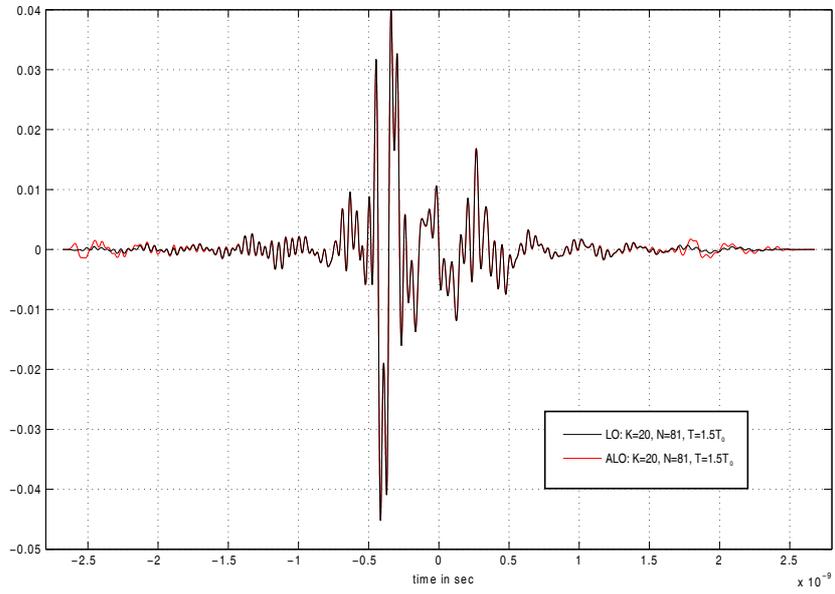}
  \caption{Centered ALO and LO pulse for $T=1.5T_0$ with $L=25, M=2K=40$.}
  \label{fig:timeALOandLO}
\end{figure}
\begin{figure}
  \centering
  \includegraphics[scale=0.57]{}
  \caption{Auto-correlation of pulses for $L=25,T=T_p/K$.}\label{fig:pulses52}
\end{figure}
%

 In \figref{fig:timeLO}\label{remark} the pulse shapes in time for the centered Löwdin orthogonal pulses are plotted. The ALO
pulses are matching the LO pulses for $T>2T_0$ almost perfectly such that we did not plot them, since the resolution of
the plot is to small to see any mismatch. Only for the critical shift $T=T_0$ a visible distortion is obtained at the
boundary. In the next \figref{fig:timeALOandLO} we plotted therefore the ALO and LO pulse for $T=1.5T_0$ to show that
the ALO pulses indeed converge very fast to the LO pulse if $T\gg T_0$. The reason is that for small time shifts of $p$
the Riesz bounds and so the clustering behaviour of $\GramM$ and $\GramMinv$ decreases.  Hence the approximation quality
of $\GramMinv$ with $\tGramMinv$ decreases, which results in a shape difference.  

To study the shift-orthogonal character of the ALO and LO pulses for various $T$, we have plotted the auto-correlations
in \figref{fig:pulses52}. As can be seen, the samples $r_{\pToM}(mT)=\rho_{1m}\approx \del_{m0}$, i.e. they vanish at
almost each sample point except in the origin. 
The NESP performance for various values of $T$ is shown in \figref{fig:t2t1}. Approaching $T=T_0$
cancels the FIR prefilter optimization of $\vg$, i.e. the spectrum becomes flat.
%
%
Finally, in \figref{fig:OrthofeatOpti} the gain of our orthogonalization strategy can be seen. 
In both cases, the $N$-ary OOPPM design  transmit at \changes an uncoded bit \changee rate of

\begin{align}
  R_b(K) = \frac{\log(N)}{T_s} = \frac{\log(4K+1)}{150T_0}.\label{eq:bdr}
\end{align}
\figref{fig:OrthofeatOpti} shows the NESP value \changes $\eta$ over the  transmit rate $R_b$ for $L=25$ and $m=2$.
Decreasing $T=T_p/K$ results in more overlap, which increases the number of symbols $N$ in $T_s$ and hence $R_b$, but
only \changee slightly decreases $\eta$, see \changes \figref{fig:t2t1},\ref{fig:OrthofeatOpti}.\changee

Summarizing, a  triplication of the \changes transmit rate \changee  from $0.18$Gbit/s to $0.6$Gbit/s is possible
without loosing much \changes signal power $\En/T_s$ \changee. Let us note the fact that the \changes transmit \changee
bit data rate is an uncoded rate \changes and is not an  achievable rate. 
For an analysis of achievable data rates by deriving the mutual information of the system see the work of Ramirez
et. al in \cite{RM07} and Güney et. al in \cite{GDA10}.
\changee Obviously, the unshaped Gaussian
monocycle then yields the highest \changes transmit \changee rate, since \eqref{eq:bdr} behaves logarithmically in the
number $N$ of symbols, as seen in \figref{fig:OrthofeatOpti}. But this has practically zero SNR when respecting the FCC
regulation \changes and results in a high error rate \eqref{eq:mpBERupper},\eqref{eq:BERupper}.  \changee  On the other
hand, a longer symbol duration, allows in \eqref{eq:psd}  a higher energy $\En$ and hence a lower error rate in
\eqref{eq:performpar}.

\changes\emph{Hence, the decreasing of the transmit rate due to the increased symbol duration used for FCC optimal filtering of
the masked Gaussian monocycle can be more than compensated by the proposed OOPPM technique.}\\
\changee
\begin{figure}
  \centering
  \includegraphics[scale=0.53]{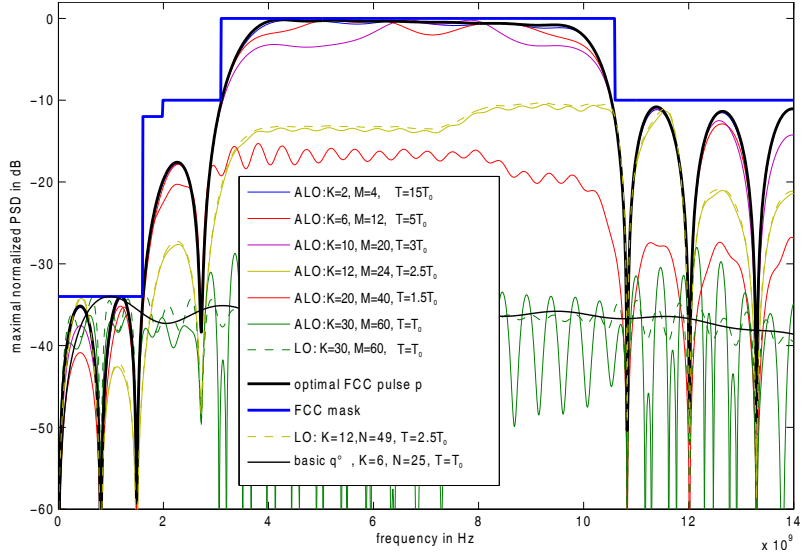}
  \caption{PSD of pulses for $L=25,T=T_p/K$.}\label{fig:t2t1}
\end{figure}
\begin{figure}
  \centering \includegraphics[width=0.70\textwidth]{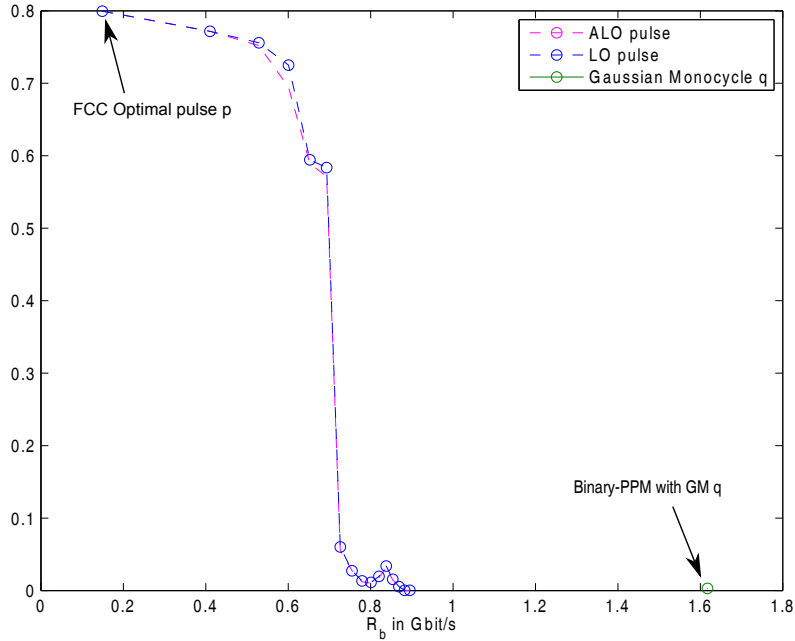} 
  \caption{Orthogonalization results of
  the fixed FCC optimized pulse $p$ for $L=25$ in $T_s=150T_0$ over various $T$ resulting in
  various  transmit rates $R_b$.}
  \label{fig:OrthofeatOpti}
\end{figure}

\section{Conclusion}

We have proposed a new pulse design method for UWB-IR which provides high spectral efficiency under FCC constraints and
allowing a $N$-ary OPPM transmission with finite transmission and receiving time by keeping almost orthogonality. In
fact, the correlation parameters can be keep below the noise level by using small time-shifts $T<T_p$. As a
result, this provides much higher data rates as compared to BPSK or BPPM. Furthermore, our pulse design provides a N-ary
orthogonal PSM transmission by getting a lower bit error rate at the price of a higher complexity. 

Simultaneous orthogonalization and spectral frequency shaping is a challenging and hard problem.
We believe that for certain shifts being integer multiples of $T_0$, a numerical solver might be helpfully to directly
solve the combined problem as discusses in \secref{sec:interdependence}.

We highlight the broad application of the OOPPM design, not only being limited to UWB systems but rather 
\changes applicable to a pulse shaped communication system under a \changee local frequency constraint in general.
\section*{Acknowledgements}
Thanks to Holger Boche  for helpful discussions.
This work was partly supported by the Deutsche Forschungsgemeinschaft
(DFG) grants Bo 1734/13-1, WI 1044/25-1 (UKoLoS) and JU 2795/1-1.

\appendix
\section{}\label{app:meyer}
\begin{proof}[Proof of \lemref{le:meyer}]
  Let $a,b\in\R$ with $a+b=-1$. Then $S^a$ and $S^b$, defined by the spectral theorem, are also positive and
  self-adjoint on $\Hil$.  Moreover  for each $f$ we have the following unique representation $f= \sum_k c_k p_k$ with
  $\vc\in l^2(\Z)$ due to the Riesz basis property. For $f=p_l$ in \eqref{eq:frameinv} we get
  \begin{align}
    p_l &= \sum_k \SPH{S^{-1} p_l}{p_k} = \sum_k \SPH{S^{a} p_l}{S^b p_k} p_k 
  \intertext{since $S^{a} p_l , S^b p_k \in\Hil $ there exist unique sequences $\vc_l,\vd_k$ s.t. $S^{a}p_l =
    \sum_\alpha c_{l \alpha} p_\alpha, S^{b} p_k = \sum_\beta d_{k \beta} p_\beta$. Hence we get}
     p_l  &= \sum_k \left(\sum_\alpha c_{l \alpha} p_\alpha, \sum_\beta d_{k \beta} p_\beta\right) p_k\\
     &= \sum_k \sum_{\alpha, \beta} c_{l\alpha} \bar{d}_{k \beta} \SPH{p_\alpha}{p_\beta} p_k\\
      &= \sum_k \sum_{\alpha, \beta} [\mC]_{l\alpha} [\Gram]_{\alpha \beta} [\mD^*]_{\beta k} p_k 
          = \sum_k [\mC \Gram \mD^*]_{lk} p_k
  \end{align}
  where $c_{l\alpha}$ and $d_{\beta k}$ are the coefficients of the biinfinite matrices $\mC$ resp. $\mD$.  Since for
  each $l\in\Z$ we have $\sum_k \delta_{lk} p_k  = p_l$ and $\{[\mC \Gram \mD^* ]_{lk} \}, \{\del_{lk}\}\in \ell^2$, we
  get $0=\sum_k ([\mC \Gram \mD^*]_{lk} -\del_{lk} )p_k$ for each $l\in\Z$.  So by \cite[Th.6.1.1(vii)]{Chr03} we can
  conclude that $[\mC\Gram \mD^*]_{lk}=\del_{lk}$ for all $l,k\in\Z$ and get 
  \begin{align}
      \mC \Gram \mD^* = \eins \LRA \ \Gram = \mC^{-1} (\mD^*)^{-1}  \LRA \  \Gram^{-1}  = \mD^* \mC.
  \end{align}
  Obviously $\mD$ and $\mC$ are not an unique decomposition of $S^{-1}$, since $a$ and $b$ are not.  If $a=b=
  -\frac{1}{2}$, we have $\mD^*=\mC^*$ and hence $\Graminv=\mC=\mD$. This establishes \eqref{eq:framegram} in an
  $L^2$-sense. 
\end{proof}

\section{}\label{app:proofunique}
\begin{proof}[Proof of \thmref{th:unique}]
 Let us first note that $\SI(p)$ is a regular SI space since $p$ is a stable generator. This has as
 consequence that frames are Riesz bases for $\SI(p)$ \cite[Th. 2.2.7 (e)]{RS94}.
 So any element $f\in\SI(p)=p \convdis \vc$ is uniquely determined by an $\ell^2$ sequence $\vc$.  By the Riesz--Fischer
 Theorem this sequence $\vc$ defines by its Fourier series a unique $L^2([0,1])$-function $\tau=\hvc$. Hence,
 the Fourier transform of any $f\in\SI(p)$ is represented uniquely by $\tau$ as $ \hat{f} = \tau \hp$, see also
 \cite[Th.2.10(d)]{BDR94b}.  On the other hand $f$ is an orthonormal generator if and only if $\Phi_{f} =1$ a.e..
  By using the periodicity of $\tau$ we get
 \begin{align}
    \Phi_{f}&= \sum_k \Betrag{\hp(\cdot -k)}^2 \Betrag{\tau(\cdot -k)}^2\\
    &= \Betrag{\tau}^2 \sum_k \Betrag{\hp(\cdot -k )}^2 = \Betrag{\tau}^2 \cdot \Phi_p = 1.\label{eq:orthocond}
  \end{align}
 Thus, we have $\Betrag{\tau}= 1/\sqrt{\Phi_p}$ almost everywhere. Let us set $\tilde{\tau}:=1/\sqrt{\Phi_p}$ a.e. and a
 complex periodic phase function $\phi:= e^{i \alpha(\cdot)} : \R \to \set{z\in\C}{\Betrag{z}=1}$ with $\alpha:\R \to
 [0,2\pi]$ $1$-periodic and measurable.  Then any function $\tau\in L^2([0,1])$ which satisfy \eqref{eq:orthocond} a.e.
 is given by $\tau=\tilde{\tau} \cdot \phi$ a.e..  The $L^2$-distance is then given by

  \begin{align}
    \Norm{p-f}^2_{L^2} &= \Norm{ \hp -\hat{f}}^2_{L^2} = \Norm{\hp-\tau\hp}^2_{L^2}
         = \Norm{p}^2_{L^2} + \Norm{\tau\hp}^2_{L^2} - \int_\R \tau \Betrag{\hp}^2 - \int_\R \cc{\tau} \Betrag{\hp}^2 \\
    &= 2- \int_\R (\tau + \cc{\tau})\Betrag{\hp}^2 = 2- 2\int_\R \cos (\alpha) \
    \Betrag{\hp}^2 {\Phi_p}^{\!-\frac{1}{2}}\\
    &\geq 2-2 \int_\R {\Betrag{\hp}^2} {\Phi_p}^{\!-\frac{1}{2}} = 2\left(1 - \SPH{ p }{\po}\right) .\label{eq:lomin}
  \end{align}

Since $\Betrag{\hp}^2$ is positive and $\Phi_p$ is bounded and strictly positive a.e. the distance is minimized if and
only if $\alpha(\nu)=0$ a.e. in $\R$,  i.e. if we have equality in \eqref{eq:lomin}.  Hence $\phi=1$ a.e. and so
$\tilde{\tau}=\tau$ a.e., which corresponds hence to the unique orthonormal Löwdin generator $f=\po$ with an
$L^2$-distance to $p$ given in \eqref{eq:lomin}.
\end{proof}
\begin{bemerkung}
Note, that in fact the phase function $\phi$ has no influence on the power spectrum 
$ |\hpo_\phi|^2=\Betrag{\phi \ttau \hp}^2 =\Betrag{\tilde{\tau}\hp}^2= \Betrag{\hpo}^2.$
\end{bemerkung}

%

\section*{References}
\bibliographystyle{elsarticle-num}   

\end{document}